\newtheorem{theorem}{Theorem}
\newtheorem{corollary}{Corollary}
\newtheorem{lemma}{Lemma}
\theoremstyle{definition}
\newtheorem{definition}{Definition}
\theoremstyle{remark}
\newtheorem{remark}{Remark}
\theoremstyle{definition}
\newtheorem{assumption}{Assumption}
\theoremstyle{definition}
\newtheorem{example}{Example}
\newcommand{\C}{\mathcal{C}}
\newcommand{\U}{\mathbb{U}}
\definecolor{blue}{RGB}{38,38,134}
\definecolor{darkblue}{RGB}{0,0,102}
\definecolor{lightblue}{RGB}{77,77,148}
\definecolor{gold}{RGB}{234, 170, 0}
\definecolor{metallic_gold}{RGB}{139, 111, 78}
\newcommand{\norm}[1]{\left\Vert #1 \right\Vert}
\newcommand{\derp}[2]{\frac{\partial #1 }{\partial #2 }}
\newcommand{\intersect}{\cap}
\DeclareMathOperator{\diag}{diag}
\begin{document}

\title{ \bf
Robust Control Barrier Functions with Uncertainty Estimation 
}

\author{Ersin Da\c{s}$^{1}$, Skylar X. Wei$^{1}$, Joel W. Burdick$^{1}$
\thanks{*This work was supported by DARPA under the LINC program.}
\thanks{$^{1}$E. Da\c{s}, S. X. Wei, and J. W. Burdick are with  the
Department of Mechanical and Civil Engineering, California Institute of
Technology, Pasadena, CA 91125, USA. ${\tt\small \{ersindas, swei, jburdick \}@caltech.edu}$ } }

\maketitle

\pagestyle{plain}

\begin{abstract}
This paper proposes a safety controller for control-affine nonlinear systems with unmodelled dynamics and disturbances to improve closed-loop robustness.   Uncertainty estimation-based control barrier functions (CBFs) are utilized to ensure robust safety in the presence of model uncertainties, which may depend on control input and states.
We present a new uncertainty/disturbance estimator with theoretical upper bounds on estimation error and estimated outputs, which are used to ensure robust safety by formulating a convex optimization problem using a high-order CBF. The possibly unsafe nominal feedback controller is augmented with the proposed estimator in two frameworks (1) an uncertainty compensator and (2) a robustifying reformulation of CBF constraint with respect to the estimator outputs. The former scheme ensures safety with performance improvement by adaptively rejecting the matched uncertainty. The second method uses uncertainty estimation to robustify higher-order CBFs for safety-critical control. The proposed methods are demonstrated in simulations of an uncertain adaptive cruise control problem and a multirotor obstacle avoidance situation.
\end{abstract}

\section{Introduction} \label{sec:intro}
Safety-critical controller synthesis is a challenging problem in the deployment of autonomous and cyber-physical systems \cite{guiochet2017,9536448}. Control barrier functions (CBFs) ensure safety by certifying forward invariance of a safe set \cite{ames2017control}. Thanks to the linearity of the CBF constraint for nonlinear control-affine systems, safety requirements can be enforced with a CBF-QP to optimize for pointwise control input \cite{ames2017control}.

Because known dynamics models are needed to synthesize the CBF constraints, the standard formulation may be sensitive to inevitable model uncertainties and disturbances. Numerous robust CBF approaches have been proposed to address this issue \cite{jankovic2018robust, kolathaya2018CLF, nguyen2021robust}. However, these methods use a bound to represent the unmodelled dynamics, which may be difficult to estimate in practice, and usually result in undesired conservativeness and a reduction in performance.   

Recently, connections between disturbance observer-based techniques, a well-studied robust control tool \cite{dacs2022combined}, and CBFs have been made to actively estimate and compensate for external disturbance to guarantee robust safety \cite{zhao2020adaptive, dacs2022robust, wang2022disturbance, cheng2022safe, alan2022disturbance}. Despite promising results in joining CBFs with disturbance observers, prior studies have been limited to state-dependent external disturbances. However, in many applications, uncertainties may depend on both states and control. Furthermore, these studies have not used active disturbance/uncertainty compensation capabilities to purposefully design observers that improve control performance and guarantee safety.   

Motivated by these limitations, we propose a more general uncertainty/disturbance estimation-based robust CBF framework for control-affine nonlinear systems with state and control input dependent uncertainties. We introduce a new uncertainty/disturbance estimator to observe unmodeled dynamics in multiple-input-multiple-output nonlinear systems.  Upper bounds for the estimation error and estimated output are developed under boundedness and local Lipschitz continuity assumptions on the uncertainty. Then, two uncertainty estimator-based robust CBFs methods are proposed. Method 1: when a matched uncertainty/disturbance dynamics condition is met, estimated values are used to compensate for these dynamics and adaptively modify the CBF condition against unmodelled system dynamics (see Fig.~\ref{fig:scheme}, top). Method 2 integrates the proposed estimator with higher-order CBFs to guarantee robustness for high relative degree safety constraints and to obtain an augmented safe set (see Fig.~\ref{fig:scheme}, bottom). To showcase our results, we simulate an uncertain adaptive cruise control example (Method 1) and a multirotor obstacle avoidance scenario (Method 2).
\begin{figure}[t]
	\centering
	\includegraphics[scale=0.65]{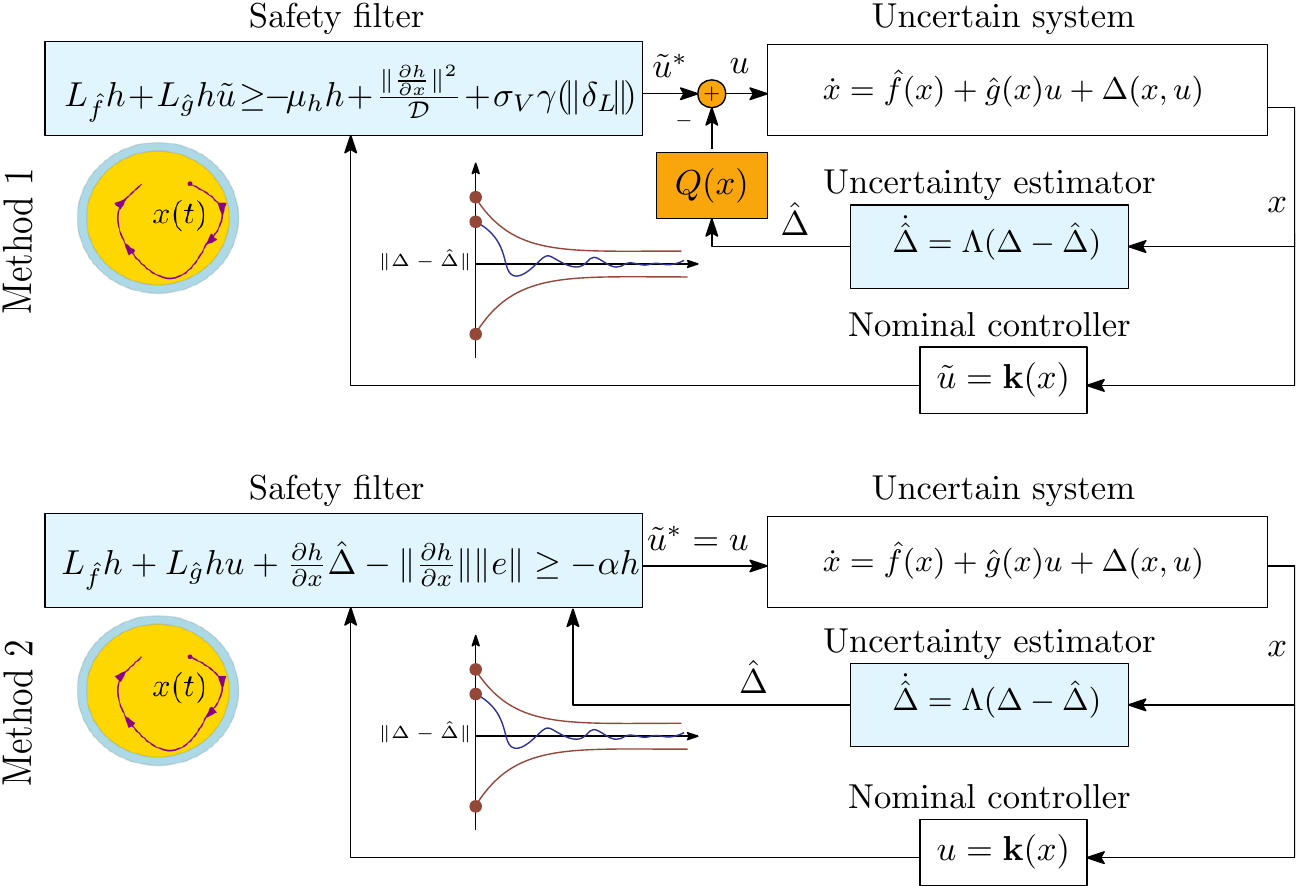}
	\caption{A block diagram of the proposed uncertainty estimator-based safe control frameworks. Augmenting a given, and potentially unsafe nominal controller with an error-bounded uncertainty estimator and safety filter, guarantees that the uncertain nonlinear system states remain in an inner safe set.}
	\label{fig:scheme}
	\vspace{-4mm}
\end{figure}

This paper is organized as follows. After preliminaries are introduced in Section~\ref{sec:pre}, Section~\ref{sec:pro} states our problem of safety-critical control under unmodeled dynamics. Section~\ref{sec:ue} proposes and analyzes an uncertainty estimator, and introduces estimator-based robust CBFs schemes. Simulations are presented in Section IV, and Section V concludes the paper.

\section{Preliminaries}
\label{sec:pre}
\textit{Notation:} ${\mathbb{R}, \mathbb{R}^+, \mathbb{R}^+_0}$ represent the set of real, positive real, and non-negative real numbers, respectively.  The Euclidean norm of a matrix is denoted by $\|\cdot\|$. 
For a given set ${\mathcal{C}  \subset \mathbb{R}^n}$, ${\partial \mathcal{C}}$ and Int$(\mathcal{C})$ denote its boundary and interior, respectively.

A continuous function ${\alpha : \mathbb{R}^+_0 \rightarrow \mathbb{R}^+_0}$ belongs to class-${\mathcal{K}}$ (${\alpha \in \mathcal{K}}$) if it is strictly monotonically increasing and ${\alpha(0) = 0}$. Further, a continuous function ${\alpha : \mathbb{R}^+_0 \rightarrow \mathbb{R}^+_0}$ belongs to class-${\mathcal{K}_\infty}$ (${\alpha \in \mathcal{K}_\infty}$) if it is strictly increasing, ${\alpha(0) = 0}$, and ${\alpha(r) \rightarrow \infty}$ as ${r \rightarrow \infty}$. A continuous function ${\alpha : \mathbb{R} \rightarrow \mathbb{R}}$ belongs to the set of extended class-$\mathcal{K}$ functions (${\alpha \in \mathcal{K}_{\infty, e}}$) if it is strictly monotonically increasing, ${\alpha(0) = 0}$, ${\alpha(r) \rightarrow \infty}$ as ${r \rightarrow \infty}$, and ${\alpha(r) \rightarrow -\infty}$ as ${r \rightarrow -\infty}$. Lastly, a continuous function ${\beta(\cdot, \cdot) : \mathbb{R}^+_0 \times \mathbb{R}^+_0 \to \mathbb{R}^+_0}$ belongs to class-${\mathcal{K L}}$ denoted by (${\beta \in \mathcal{KL}}$), if for every ${s \in \mathbb{R}^+_0}$, ${\beta(\cdot, s) : \mathbb{R}^+_0 \to \mathbb{R}^+_0}$ is a class-$\mathcal{K}$ function and for every ${r \in \mathbb{R}^+_0}$, ${\beta(r, \cdot) }$ is decreasing and ${\lim_{s \to \infty} \beta(s, r) = 0}$.

In this work, we consider a nonlinear control affine system:
\begin{equation}
\label{system}
    \dot{x}  = f(x) + g(x) u,
\end{equation}
where ${x \in X \subset \mathbb{R}^n}$ is the states, ${u \in U \subset \mathbb{R}^m}$ is the control input, and ${f: X \rightarrow \mathbb{R}^n}$, ${g: X \rightarrow \mathbb{R}^{n \times m} }$ are locally Lipschitz continuous functions. Given an initial state, ${x_0 \in X}$ and a locally Lipschitz continuous controller ${u = \mathbf{k}: X \to U}$, which yields a locally Lipschitz continuous closed-loop system, there exists a unique solution ${x(t),~\forall t \geq 0}$ satisfying the closed-loop dynamics and initial state. We assume ${f(0) = 0}$. Throughout this paper, we call~(\ref{system}) the \textit{actual (uncertain, real) model}.


\subsection{Control Barrier Functions}
\label{sec:cbfs}
We consider a set ${\mathcal{C} \subset X }$ defined as a 0-superlevel set of a continuously differentiable function ${h: X \rightarrow \mathbb{R}}$ as
\begin{align}
\label{CBF1}
    \mathcal{C} \triangleq \left\{ x \in X \subset \mathbb{R}^n : h(x) \geq 0 \right\}, \\
    \label{CBF12}
    \partial \mathcal{C} \triangleq \left\{ {x \in X \subset \mathbb{R}^n} : h(x) = 0 \right\}, \\
    \label{CBF13}
    \text{Int}(\mathcal{C}) \triangleq \left\{ {x \in X \subset \mathbb{R}^n} : h(x) > 0 \right\}.
\end{align}
This set is forward invariant if, for every initial condition ${x(0) \in \mathcal{C}}$, the solution of (\ref{system}) satisfies ${x(t) \in \mathcal{C}, ~\forall t \geq 0}$. Then, the closed-loop system (\ref{system}) is safe on the set $\mathcal{C}$ if $\mathcal{C}$ is forward invariant \cite{ames2017control}.

\begin{definition}[Control Barrier Function \cite{ames2017control}]
\label{def:cbf}
Let ${\mathcal{C} \subset X }$ be the 0-superlevel set of a continuously differentiable function ${h: X \rightarrow \mathbb{R}}$. 
Then, $h(x)$ is a \textit{Control Barrier Function} for system (\ref{system}) on $\mathcal{C}$ if ${\frac{\partial h}{\partial x} \neq 0}$ for all ${ x \in \partial \mathcal{C}}$ and there exists $\alpha \in \mathcal{K}_{\infty, e}$ such that $\forall x \in \mathcal{C}$:
\begin{align}
\label{cbf}
   \sup_{u \in U}\! \Big ( \!{\dot{h}(x, u)}  \!=\!
   \underbrace{ \dfrac{\partial h(x)}{\partial x}\! f(x) }_{L_f h(x)} \!+ \!\underbrace { \dfrac{\partial h(x)}{\partial x}g(x) }_{L_g h(x)} \!u \!\Big ) 
   \!\geq\! -\alpha (h(x)).
\end{align}
\end{definition}
\begin{theorem}\cite{ames2017control}
    \label{teo:cbfdef}
    If $h(x)$ is a valid CBF for \eqref{system} on $\mathcal{C}$, then any Lipschitz continuous controller ${u= \mathbf{k}: X \to U}$ satisfying 
    \begin{equation}
        \label{eq:cbf_def}
        \dot{h}\left(x, \mathbf{k}(x)\right) \geq - \alpha (h(x)),~~\forall x \in \mathcal{C}
    \end{equation}
    results in the safety of set $\mathcal{C}$ for \eqref{system}. 
\end{theorem}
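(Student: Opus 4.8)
The plan is to show that the pointwise constraint $\dot h(x,\mathbf{k}(x)) \ge -\alpha(h(x))$ forces the scalar signal $t \mapsto h(x(t))$ to remain nonnegative along every closed-loop trajectory, which is precisely forward invariance of $\mathcal{C}$ and hence, by the definition preceding the theorem, safety.

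First I would fix the controller and form the closed-loop vector field $f_{\mathrm{cl}}(x) \triangleq f(x) + g(x)\mathbf{k}(x)$. Because $f$, $g$, and $\mathbf{k}$ are locally Lipschitz, $f_{\mathrm{cl}}$ is locally Lipschitz, so for each $x(0) \in \mathcal{C}$ there is a unique maximal solution $x(t)$, and all subsequent statements are made on its interval of existence. Define $\eta(t) \triangleq h(x(t))$. Since $h$ is continuously differentiable, $\eta$ is differentiable, and by the chain rule together with the hypothesis, $\dot\eta(t) = \frac{\partial h}{\partial x}\big(x(t)\big) f_{\mathrm{cl}}(x(t)) = \dot h(x(t),\mathbf{k}(x(t))) \ge -\alpha(\eta(t))$.

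The core step is a comparison argument. I would introduce the scalar initial value problem $\dot y = -\alpha(y)$ with $y(0) = \eta(0) = h(x(0)) \ge 0$ and invoke the comparison lemma to conclude $\eta(t) \ge y(t)$ on the common interval of existence. Because $\alpha \in \mathcal{K}_{\infty,e}$ satisfies $\alpha(0) = 0$, the constant $y \equiv 0$ is an equilibrium of the comparison system; monotonicity of solutions of this scalar autonomous ODE (a trajectory started at $y(0) \ge 0$ cannot cross the equilibrium at $0$) then gives $y(t) \ge 0$ for all $t$. Chaining the two inequalities yields $h(x(t)) = \eta(t) \ge y(t) \ge 0$, i.e. $x(t) \in \mathcal{C}$ throughout the interval of existence, and a standard argument extends the solution to all $t \ge 0$, establishing forward invariance and therefore safety.

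The main obstacle is the regularity of $\alpha$: as a class-$\mathcal{K}_{\infty,e}$ function it is only assumed continuous, not locally Lipschitz, so the comparison ODE need not admit unique solutions and the textbook comparison lemma does not apply verbatim. I would resolve this either by lower-bounding $\alpha$ with a locally Lipschitz extended class-$\mathcal{K}$ function and passing to the limit, or, more cleanly, by a direct Nagumo-type contradiction: if the trajectory ever exited $\mathcal{C}$, continuity of $\eta$ would furnish a first exit time $t^\ast$ with $\eta(t^\ast) = 0$ and $\dot\eta(t^\ast) < 0$, contradicting $\dot\eta(t^\ast) \ge -\alpha(0) = 0$. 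Making the first–exit–time construction rigorous (continuity of $\eta$ and the sign of $\dot\eta$ just before $t^\ast$) is the delicate bookkeeping, but it sidesteps any uniqueness requirement on the comparison system.
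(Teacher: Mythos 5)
The paper gives no in-text proof of this theorem---it is quoted with a citation to \cite{ames2017control}---and your comparison-lemma strategy is exactly the standard argument from that literature: reduce to the scalar differential inequality $\dot\eta(t) \ge -\alpha(\eta(t))$ along the closed-loop flow and conclude $\eta(t)\ge 0$. Your reduction step is correct, and you deserve credit for flagging the regularity issue (continuity-only $\alpha$ breaks the textbook comparison lemma), which the cited proofs typically gloss over. The problem is that \emph{both} of your proposed repairs fail as stated. The approximation direction is backwards: if $\tilde\alpha \le \alpha$ is a locally Lipschitz \emph{minorant}, then for $\eta \ge 0$ one has $-\tilde\alpha(\eta) \ge -\alpha(\eta)$, so the hypothesis $\dot\eta \ge -\alpha(\eta)$ does \emph{not} imply $\dot\eta \ge -\tilde\alpha(\eta)$, and the comparison with $\dot y = -\tilde\alpha(y)$ never gets off the ground. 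What you need is a locally Lipschitz \emph{majorant} $\tilde\alpha \ge \alpha$ with $\tilde\alpha(0)=0$ and $\tilde\alpha$ increasing; then $\dot\eta \ge -\alpha(\eta) \ge -\tilde\alpha(\eta)$ holds on $\{\eta \ge 0\}$, uniqueness for the Lipschitz comparison system keeps $y(t)\ge 0$ from $y(0)\ge 0$, and the comparison lemma applies.

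The Nagumo-type contradiction is also not valid as written: at a first exit time $t^\ast$ with $\eta(t^\ast)=0$ you cannot assert $\dot\eta(t^\ast)<0$. A trajectory can leave the set with vanishing derivative---e.g.\ $\eta(t) = -(t-t^\ast)^3$ crosses zero with $\dot\eta(t^\ast)=0$, perfectly consistent with your constraint $\dot\eta(t^\ast) \ge -\alpha(0) = 0$---so there is no contradiction at the crossing point. The correct elementary fix exploits the strict sign of the extended class-$\mathcal{K}$ function \emph{below} zero rather than the derivative \emph{at} zero: if $\eta(t_1)<0$ for some $t_1$, set $t^\ast = \sup\{t \le t_1 : \eta(t) = 0\}$ (well defined by continuity and $\eta(0)\ge 0$); on $(t^\ast, t_1]$ you have $\eta < 0$, hence $\dot\eta \ge -\alpha(\eta) > 0$, and integrating yields $\eta(t_1) > \eta(t^\ast) = 0$, contradicting $\eta(t_1)<0$. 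This repaired argument needs no Lipschitzness or uniqueness at all, and---like your sketch---never invokes the condition $\frac{\partial h}{\partial x} \ne 0$ on $\partial\mathcal{C}$, which is only needed for the Nagumo/converse formulation. One last small point: forward invariance alone does not extend the maximal solution to all $t\ge 0$ (that needs boundedness or compactness of $\mathcal{C}$), but the paper's preliminaries already posit global existence of solutions for locally Lipschitz closed loops, so you may lean on that standing assumption rather than an unspecified ``standard argument.''
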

Given a nominal (potentially unsafe) locally Lipschitz continuous controller $\mathbf{k_d}: X \to U$ and a CBF ${h(x)}$ for system~(\ref{system}), the safety can be ensured by solving the following CBF-Quadratic Program (CBF-QP):   
\begin{argmini}|s|
{u \in U}{\|u-\mathbf{k_d}(x)\|^2}
{\label{CBF-QP}}
{\mathbf{k}=}
\addConstraint{\dot{h}(x, u)  \geq - \alpha (h(x)) }
\end{argmini}

\subsection{Model Uncertainty}

Consider the \textit{nominal} control affine system:
\begin{equation}
\label{sysnom}
    \dot{x}  = \hat{f}(x) + \hat{g}(x) u, 
\end{equation}
where ${\hat{f}: X \rightarrow \mathbb{R}^n}$, ${\hat{g}: X \rightarrow \mathbb{R}^{n \times m} }$ are locally Lipschitz continuous are known for controller design. To define the discrepancies between the actual model and nominal model, we add and subtract \eqref{sysnom} to \eqref{system}:
\begin{equation}
\label{sysun}
    \dot{x}  = \hat{f}(x) + \hat{g}(x) u + \underbrace{ {f}(x)- \hat{f}(x) }_{\Delta f(x)} + \underbrace { ( {g}(x)- \hat{g}(x) ) }_{\Delta g(x)} u,
\end{equation}
where ${\Delta f: X \to \mathbb{R}^n,~\Delta g:  X \to \mathbb{R}^{n \times m} }$ are the unmodelled system dynamics, i.e., model uncertainties.
Now, we have the time derivative of CBF for the uncertain system as:
\begin{equation}
    \label{eq:hdot}
    \!\!{\dot{h}(x, u)}  \!=\!
   \dfrac{\partial h}{\partial x} ( \hat{f}(x) \!+\! \hat{g} (x) u )  \\ \!+\!  \dfrac{\partial h}{\partial x}  \Delta f(x)   
   \!+\!  \dfrac{\partial h}{\partial x} \Delta g(x)  u.
\end{equation}

\begin{assumption}
\label{as1}
The augmented uncertainty function 
\begin{equation}
    \label{delta}
    \Delta(x,u) := \Delta f(x) + \Delta g(x) u,
\end{equation}
is locally Lipschitz continuous for all ${t \geq 0}$ with Lipschitz constant $\delta_L$, and bounded by ${\| \Delta(x,u) \| \leq \delta_b}$ for all ${(x,u) \in X\times U}$. Further, we assume ${\delta_L, \delta_b \in \mathbb{R}^+}$ are known constants.
\end{assumption}

\begin{definition}[Input Relative Degree] The \textit{input relative degree (IRD)} of a sufficiently differentiable output function $h: X \rightarrow \mathbb{R}$ of system \eqref{system} is defined as an integer ${r \leq n}$ if ${\forall {x \in X}}$:
\begin{align}
\begin{split}
    \label{ird_ird}
    { L_{{g}}  L_{{f}}^{r-1} h(x) \neq 0} , \\
    L_{{g}} h(x) = L_{{g}}  L_{{f}} h(x) = \cdots = L_{{g}}  L_{{f}}^{r-2} h(x) = 0, 
\end{split}
\end{align}  
\end{definition}
\begin{definition}[Disturbance Relative Degree] The \textit{disturbance relative degree (DRD)} of a sufficiently differentiable output function $h: X \rightarrow \mathbb{R}$ of the uncertain system \eqref{sysun} with the augmented uncertainty function \eqref{delta} is defined as ${v \leq n}$ if ${\forall {x \in X}}$:
\begin{align}
\begin{split}
    \label{ird_drd}
    {\derp{L_{\hat{f}}^{r-1} h(x)}{x} \Delta(x, u) \neq 0}, \\
    \derp{h(x)}{x} \Delta(x, u) =  \cdots = \derp{L_{\hat{f}}^{r-2} h(x)}{x} \Delta(x, u) = 0, 
\end{split}
\end{align}  
\end{definition}
The uncertainty in \eqref{sysun} is {\em mismatched} if {DRD < IRD} and {\em matched} if {DRD = IRD}. Practically speaking, the matching condition implies the disturbance, and the control input appears on the same output channel of the system \cite{yang2012}. 
\begin{assumption}
    System \eqref{system} with a sufficiently differentiable output function $h:X\to \mathbb{R}$ satisfies {IRD = DRD}, i.e., {\em matched} uncertainty.
\end{assumption}

\subsection{Higher Order Control Barrier Functions}
\label{sec:CBF}
The definitions of CBFs in Section~\ref{sec:cbfs} require a relative degree one ${h(x)}$. However, for some safety requirements, we may need to differentiate function $h(x)$ with respect to the system (\ref{system}) until the control input appears. In such scenarios, we can utilize higher order CBFs \cite{xiao2019control}, a general form of exponential CBFs \cite{nguyen2016exponential}.
\begin{definition}[High Order Control Barrier Function \cite{xiao2019control}]
For a differentiable function $h : X \to \mathbb{R}$ with IRD = $m$, consider a sequence of functions ${\phi_i : X \to \mathbb{R}, i = 1,2, \ldots , m } $, as
\begin{align}
    \phi_0 (x) := h(x), \phi_i (x) := \dot{\phi}_{i-1} (x) + \alpha_i ( {\phi}_{i-1} (x) ),
\end{align}
where ${\alpha_i \in \mathcal{K}_\infty}$ is a ${(m-i)^{th}}$ order differentiable function. We also define the sets ${\mathcal{C}_i := \{ x \in X : {\phi}_{i-1} (x) \geq 0 \}}$. Then, $h(x)$ is a \textit{high order control barrier function (HOCBF)} for system (\ref{system}) on ${\mathcal{C}_1 \intersect \mathcal{C}_2 \intersect \ldots \mathcal{C}_m }$ if there exists ${\alpha_m \in \mathcal{K}_{\infty}}$ and ${\alpha_i \in \mathcal{K}_\infty}$ such that ${\forall x \in {\mathcal{C}_1 \intersect \mathcal{C}_2 \intersect \ldots \mathcal{C}_m }}$:
\begin{align}
\label{hocbf}
   \nonumber
   \sup_{u \in U}  \Big ( {{h}^m(x, u)}  =
   L_f^m h(x) &+ L_g L_f^{m-1} h(x) u + 
   \mathcal{O}(h(x)) \Big )  \\ & \geq -\alpha_m (h_{m-1}(x)) ,
\end{align}
where ${{h}^m(x, u)}$ is the ${m^{th}}$ time derivative of $h(x)$, ${L_f^m h(x) \triangleq \derp{L_f^{m-1} h(x)}{x} f(x)}$, ${L_g L_f^{m-1} h(x) \!\triangleq\! \derp{L_f^{m-1} h(x)}{x} g(x)}$, and ${\mathcal{O}(h(x))\triangleq \sum_{i= 1}^{m-1} L_f^i ( \alpha_{m-i} \circ  {\phi}_{m-i-1} ) }$.
\end{definition}
Given an HOCBF $h(x)$ with ${\alpha_m \in \mathcal{K}_{\infty}}$ and ${\alpha_i \in \mathcal{K}_\infty}$ for system (\ref{system}), the set of safe controllers given as
\begin{equation}
\label{eq:CBFkx} 
    K_\text{HOCBF}(x) \triangleq \{ u \in U  \big |  {{h}^m(x, u)} \geq - \alpha_m (h^{m-1}(x)) \}.
\end{equation}

\section{Problem Statement and Motivation}
\label{sec:pro}
Real-world safety-critical control systems suffer from not completely modelled uncertainties that may deteriorate the safety guarantees of controllers that are designed considering nominal models. We must consider the actual models when synthesizing safe controllers to address this robustness problem, as exemplified in the following example.  

\begin{example} \label{ex:setup}
Consider an adaptive cruise control (ACC), whose controlled vehicle dynamics take the form (\ref{sysun}):
\begin{equation}
\label{ACCdyn}
\underbrace{ \begin{bmatrix} 
\dot{v}_f \\ 
\dot{D}
\end{bmatrix} }_{\dot{x}}
=
\underbrace{ \begin{bmatrix}
\frac{-F_r(v_f)}{M} \\ 
v_l - v_f
\end{bmatrix} }_{\hat{f}(x)}
+ 
\underbrace{ \begin{bmatrix}
\frac{1}{M} \\ 
0
\end{bmatrix} }_{\hat{g}(x)}
u
+ 
\overbrace{
\underbrace{ \begin{bmatrix}
\delta_{F_r,M} \\ 
0
\end{bmatrix} }_{\Delta f(x)}
+ 
\underbrace{ \begin{bmatrix}
\delta_{M^{-1}} \\ 
0
\end{bmatrix} }_{\Delta g(x)} u}^{\Delta(x, u)},
\end{equation}
where ${v_l ~ [m/s]}$ and ${v_f ~ [m/s]}$ are respectively the velocities of the lead car and following (controlled) car, ${D ~ [m]}$ is the distance between the lead and following cars, ${F_r = f_0 + f_1 v_f + f_2 v_f^2 ~ [N]}$ is the aerodynamic drag, ${M ~ [kg]}$ is the following car's mass, and ${\delta_{F_r,M},~ \delta_{{M}^{-1}}}$ are parametric uncertainties (for example ${1/M := \hat{M}^{-1} \pm \delta_{{M}^{-1}}}$) about ${{-F_r(v_f)}/{M}, 1/M }$.

For safety, the following car must maintain a safe distance behind the lead car: ${D \geq v_f \tau_d }$, where $\tau_d$ is a time interval. Therefore, we choose a CBF: ${h(x) \triangleq D - v_f \tau_d}$, which encodes a safe stopping distance. This CBF choice results in 
\begin{align}
    L_{\hat{f}} h(x) &= \tau_d \frac{F_r(v_f)}{M} + v_l-v_f , \\
    L_{\hat{g}} h(x) &=-\frac{\tau_d}{M} , \\
    \frac{\partial h}{\partial x} \Delta(x, u) &= -\tau_d(\delta_{F_r,M} + \delta_{M^{-1}}~u ) . 
\end{align}
The CBF ${h(x)}$ has IRD of 1, and DRD of 1, which satisfies the matching condition.
Thus, we choose a CBF following Definition~\ref{def:cbf} by selecting ${-\alpha (h(x)) = -\alpha h(x),~\alpha \in \mathbb{R}^+}$ for simplicity. Suppose $h(x)$ is a valid CBF for the nominal system (without uncertainty ${\Delta(x, u)}$):  
\begin{align}
\label{eq:cbfacc}
   \tau_d \frac{F_r(v_f)}{M} + v_l-v_f -\frac{\tau_d}{M} u  \geq - \alpha h(x) .
\end{align}
Then, we use the CBF constraint \eqref{eq:cbf_def} for the uncertain model:
\begin{align}
\begin{split}
\label{eq:cbfunacc}
   \overbrace{\tau_d \frac{F_r(v_f)}{M} + v_l-v_f -\frac{\tau_d}{M} u + \alpha h(x)}^{\geq 0} \geq \\
   \tau_d(\delta_{F_r,M} + \delta_{M^{-1}}~u )  
\end{split}
\end{align}
The right-hand side of \eqref{eq:cbfunacc} is the effect of the uncertainty on the CBF constraint. If it is greater than 0, (e.g.: if the real vehicle mass is less than the nominal mass and if $u>0$, then the term ${\delta_{M^{-1}} u}$ will be positive), safety may be violated.
\end{example}

To remedy this problem, we propose to estimate the uncertainty with a quantified estimation error bound and incorporate the effects into the CBF constraints. 

\section{Uncertainty Estimator}
\label{sec:ue}
To estimate the uncertainty ${\Delta(x, u)}$ defined in \eqref{sysun}, \eqref{delta} which we denote as $\hat{\Delta}$, we propose a novel uncertainty estimator that has the structure:
\begin{align}
    \label{eq:bhat}
    \hat{\Delta}(t) &= {\Lambda} x - \xi(t), \\
    \dot{\xi}(t) &= \Lambda \left(\hat{f}(x) + \hat{g}(x) u + \hat{\Delta}(x,\xi(t))\right),  
    \label{eq:xidot}
\end{align}
where ${\xi \in \mathbb{R}^{n}}$ is an auxiliary state vector and ${0 \prec \Lambda \in \mathbb{R}^{n \times n}}$ is a diagonal positive definite estimator design matrix, i.e., ${\Lambda \triangleq \diag(\lambda_1, \ldots, \lambda_n)}$. Without loss of generality,  the initial values of ${\hat{\Delta}(t)}$ and ${\xi(t)}$ are set to zero, i.e., ${\hat{\Delta}(0) = {\bf 0}}$, ${\xi(0) = \Lambda x(0)}$. Note that this estimator is an extension of the disturbance observer studied in \cite{dacs2022robust, alan2022disturbance} to multiple-input multiple-output (MIMO) systems.

The following Lemma characterizes the \textit{input-to-state stability} (ISS) of the estimation error dynamics around zero; it implies that the estimation error is bounded.
\begin{lemma}
Consider uncertain system \eqref{sysun} with a continuously differentiable function $\Delta(x, u)$ that satisfies Assumption~\ref{as1} with a Lipschitz constant ${\delta_L}$, upper bound ${\delta_b}$, and the uncertainty estimator \eqref{eq:bhat}-\eqref{eq:xidot} with a diagonal and positive definite matrix ${\Lambda \in \mathbb{R}^{n \times n}}$. The uncertainty estimator's error dynamics, ${\dot{e}(t) = \dot{{\Delta}}(t) - \dot{\hat{\Delta}}(t)}$, are ISS around ${e = 0}$. \label{lem:Velyap}
\end{lemma}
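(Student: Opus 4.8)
The plan is to derive an explicit linear differential equation for the estimation error $e = \Delta - \hat{\Delta}$ and recognize it as an input-to-state stable system driven by $\dot{\Delta}$. First I would differentiate the estimate in \eqref{eq:bhat}, writing $\dot{\hat{\Delta}} = \Lambda \dot{x} - \dot{\xi}$. Substituting the \emph{actual} dynamics \eqref{sysun}, namely $\dot{x} = \hat{f}(x) + \hat{g}(x)u + \Delta(x,u)$, for $\dot{x}$, and the estimator internal dynamics \eqref{eq:xidot} for $\dot{\xi}$, the known terms $\Lambda(\hat{f}(x)+\hat{g}(x)u)$ cancel exactly, leaving the clean identity $\dot{\hat{\Delta}} = \Lambda(\Delta - \hat{\Delta}) = \Lambda e$. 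Differentiating $e$ then gives the error dynamics
\begin{equation*}
\dot{e} = \dot{\Delta} - \dot{\hat{\Delta}} = -\Lambda e + \dot{\Delta},
\end{equation*}
a linear system with state matrix $-\Lambda$ and exogenous input $\dot{\Delta}(t)$.

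Second, since $\Lambda$ is diagonal and positive definite, $-\Lambda$ is Hurwitz with every eigenvalue $-\lambda_i < 0$, so I would certify ISS through a quadratic Lyapunov function. Taking $V(e) = \tfrac{1}{2} e^\top e$ yields
\begin{equation*}
\dot{V} = -e^\top \Lambda e + e^\top \dot{\Delta} \leq -\lambda_{\min} \norm{e}^2 + \norm{e}\,\norm{\dot{\Delta}},
\end{equation*}
where $\lambda_{\min} = \min_i \lambda_i$ and the last step uses the Cauchy--Schwarz inequality. Invoking the Lipschitz hypothesis of Assumption~\ref{as1} to bound the input by $\norm{\dot{\Delta}} \leq \delta_L$, a standard ISS-Lyapunov argument (fixing $\theta \in (0,1)$ and noting $\dot{V} < 0$ whenever $\norm{e} > \delta_L/((1-\theta)\lambda_{\min})$) shows that $V$ decreases outside a ball whose radius scales with the input magnitude. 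This produces a $\classKL$-plus-$\classK$ bound on $\norm{e(t)}$ in terms of $\norm{e(0)}$ and $\sup_t \norm{\dot{\Delta}}$, i.e.\ ISS of the error dynamics around $e = 0$, and in particular uniform boundedness of the estimation error.

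The main obstacle is not the cancellation, which is algebraically automatic, but correctly identifying and bounding the forcing term $\dot{\Delta}$ that plays the role of the ISS input. I would need to argue that $\dot{\Delta}(t)$ is well-defined and bounded along closed-loop trajectories: interpreting the Lipschitz condition of Assumption~\ref{as1} as a bound on the time variation of $\Delta(x(t),u(t))$ gives $\norm{\dot{\Delta}(t)} \leq \delta_L$ almost everywhere, which is precisely the uniform input bound the ISS estimate requires. A secondary point to state carefully is that the analysis presupposes existence of the closed-loop solution on $[0,\infty)$ so that $e(t)$ and $\dot{\Delta}(t)$ are defined; this follows from the local Lipschitz assumptions on $f$, $g$, and $\mathbf{k}$ together with the boundedness $\norm{\Delta} \leq \delta_b$.
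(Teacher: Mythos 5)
Your proposal is correct and takes essentially the same route as the paper: the same error dynamics $\dot{e} = -\Lambda e + \dot{\Delta}$ obtained by the exact cancellation of the known terms, the same Lyapunov function $V(e) = \tfrac{1}{2}e^\top e$, and the same use of Assumption~\ref{as1} to bound the forcing term via $\|\dot{\Delta}\| \leq \delta_L$. The only cosmetic difference is the final step, where the paper completes the square (Young's inequality) to reach the explicit dissipation inequality $\dot{V} \leq -\tfrac{\lambda_{\min}}{2}\|e\|^2 + \tfrac{\delta_L^2}{2\lambda_{\min}}$, while you invoke the equivalent $\theta$-splitting ISS-Lyapunov characterization; both yield the same ISS conclusion.
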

\begin{proof}
We define the uncertainty estimation error as 
\begin{align}
    e(t) = {\Delta}(t) - \hat{\Delta}(t) , ~ e(t_0) = e_0 
    \label{eq:error}
\end{align}
From \eqref{sysun}, \eqref{delta}, \eqref{eq:bhat} and \eqref{eq:xidot}, we have
\begin{align}
    \dot{e}(t) &= \dot{{\Delta}}(t) - \Lambda e(t) , ~ e(t_0) = e_0.
    \label{eq:dote}
\end{align}
Consider a candidate Lyapunov function ${V_e(e) \triangleq \frac{1}{2} e^T e}$. Taking a time derivative of ${V_e}$ along the trajectory of \eqref{eq:dote} yields
\begin{align}
\begin{split}
    \dot{V} &= e^T ( \dot{{\Delta}} - \Lambda e )
     \leq - e^T \Lambda e + \| e^T \| \underbrace{\| \dot{{\Delta}}\|}_{\leq \delta_L} \\
    & \leq - e^T \Lambda e + \| e^T \| \delta_L  \leq - \lambda_{min} \| e 
 \| ^2 + \| e^T \| \delta_L \label{eq:dotV_intermediate}  
 \end{split}
\end{align}
The last inequality arises from the fact that since matrix $\Lambda$ is a real symmetric positive definite, Rayleigh's inequality holds: ${\lambda_{min} \| e
 \| ^2 \leq e^T \Lambda e \leq \lambda_{max}\|e\|^2}$, where ${\lambda_{min}, \lambda_{max}}$ are the minimum and maximum eigenvalues of $\Lambda$. Introduce an auxiliary non-negative constant, $\left (  \sqrt{\lambda_{min}} \| e
 \| -  \dfrac{\delta_L}{\sqrt{\lambda_{min}}} \right )^2 \geq 0$. We replace ${\| e^T \| \delta_L}$ with the following upper bound:
 \begin{align}
 \| e^T \| \delta_L \leq \dfrac{\lambda_{min}}{2} \| e
 \| ^2 + \dfrac{\delta_L^2}{2 \lambda_{min} }. \label{eq:dotV_intermediate2}
\end{align}
Finally, substituting \eqref{eq:dotV_intermediate2} into \eqref{eq:dotV_intermediate} yields
 \begin{align}
  \dot{V}(e, \dot{e})  \leq - \dfrac{\lambda_{min}}{2} \| e
 \| ^2 + \dfrac{\delta_L^2}{2 \lambda_{min} } ,
 \label{eq:vedot}
\end{align}
which ensures ISS of the observer error dynamics near ${e = 0}$: 
\begin{align}
\dot{V}(e, \dot{e})  \leq -2 \mu_e \| e
 \| ^2 + \gamma(\| \delta_L \|),
 \label{eq:er_iss}
\end{align}
with ${ \mu_e \triangleq \frac{\lambda_{min}}{4}}$, and a class-${\mathcal{K}}$ function ${\gamma(\| \delta_L \|) \triangleq \frac{\delta_L^2}{2 \lambda_{min} }}$.
\end{proof}
From Lemma~\ref{lem:Velyap}, the ISS property ensures the boundedness of error dynamics ${\|e(t)\|}$. This motivates us to derive an explicit tight bound, which results from applying Lemma~\ref{lem:expA} to the estimation error and output. 
\begin{lemma}(Adapted from \cite{nechepurenko2002})
\label{lem:expA}
Consider a Hurwitz matrix  ${{\rm A} \in \mathbb{R}^{n \times n}}$. The norm ${\|{\rm e}^{{\rm A} t}\|}$ is bounded by 
\begin{equation}
    \label{eq:ebound}
    {\|{\rm e}^{{\rm A} t}\|}
     \leq  \left ( \sqrt{\| P^{-1}\| \| P \|}\right )   {\rm e}^{ -\left ( \dfrac{t}{2 \| P \|} \right ) } ,
\end{equation}
where ${P}$ is the Hermitian matrix solution that solves the Lyapunov equation: ${P {\rm A} + {\rm A}^T P  = -I }$.
\end{lemma}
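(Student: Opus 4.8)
The plan is to use the Lyapunov function associated with $P$ to obtain an exponential decay estimate on trajectories of the linear system $\dot{x} = {\rm A} x$, and then to promote this into an operator-norm bound on ${\rm e}^{{\rm A} t}$. Since ${\rm A}$ is Hurwitz, the Lyapunov equation $P {\rm A} + {\rm A}^T P = -I$ admits a unique Hermitian positive definite solution $P \succ 0$ (explicitly $P = \int_0^\infty {\rm e}^{{\rm A}^T s} {\rm e}^{{\rm A} s}\,ds$), so $\|P\|$ and $\|P^{-1}\|$ are well defined and finite. Because $P$ is Hermitian positive definite, its spectral norm equals its largest eigenvalue, $\|P\| = \lambda_{max}(P)$, while $\|P^{-1}\| = 1/\lambda_{min}(P)$; these two identities are precisely what let me rewrite the final envelope in terms of $\|P\|$ and $\|P^{-1}\|$.

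First I would fix an arbitrary initial condition $x_0$, consider the trajectory $x(t) = {\rm e}^{{\rm A} t} x_0$, and introduce $V(x) \triangleq x^T P x$. Differentiating along the flow and substituting the Lyapunov equation gives $\dot{V} = x^T(P {\rm A} + {\rm A}^T P)x = -\|x\|^2$. Next I would bound $V$ from both sides by Rayleigh's inequality, $\lambda_{min}(P)\|x\|^2 \le V(x) \le \lambda_{max}(P)\|x\|^2$, which converts the equality $\dot{V} = -\|x\|^2$ into the differential inequality $\dot{V} \le -V/\lambda_{max}(P) = -V/\|P\|$.

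Applying the comparison lemma then yields $V(x(t)) \le V(x_0)\,{\rm e}^{-t/\|P\|}$. Using the lower Rayleigh bound on the left-hand side and the upper bound on the right gives $\lambda_{min}(P)\|x(t)\|^2 \le \lambda_{max}(P)\|x_0\|^2\,{\rm e}^{-t/\|P\|}$, hence $\|x(t)\|^2 \le \|P\|\,\|P^{-1}\|\,\|x_0\|^2\,{\rm e}^{-t/\|P\|}$. Taking square roots and dividing by $\|x_0\|$ produces $\|{\rm e}^{{\rm A} t} x_0\| \le \sqrt{\|P\|\,\|P^{-1}\|}\,\|x_0\|\,{\rm e}^{-t/(2\|P\|)}$. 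Finally, since this holds for every nonzero $x_0$ and $\|{\rm e}^{{\rm A} t}\| = \sup_{x_0 \neq 0}\|{\rm e}^{{\rm A} t}x_0\|/\|x_0\|$, the claimed bound follows.

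The individual steps are essentially routine; the one place to be careful is the identification of the operator norms with the extreme eigenvalues of $P$ (valid precisely because $P$ is Hermitian positive definite) and the correct passage from the differential inequality to the exponential envelope via the comparison lemma. I do not expect a genuine obstacle beyond this bookkeeping, since ISS of the error dynamics in Lemma~\ref{lem:Velyap} already guarantees the structural ingredients; here the work is simply to make the decay rate and the multiplicative constant explicit in terms of $P$.
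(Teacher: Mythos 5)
Your proof is correct; the paper itself states this lemma without proof (citing \cite{nechepurenko2002}), and your quadratic-Lyapunov argument --- $V = x^T P x$, $\dot V = -\|x\|^2 \le -V/\|P\|$, comparison lemma, then the Rayleigh identifications $\|P\| = \lambda_{max}(P)$ and $\|P^{-1}\| = 1/\lambda_{min}(P)$ --- is precisely the standard derivation of this bound, carried out without gaps. The only inessential point is your closing appeal to Lemma~\ref{lem:Velyap}: the decay estimate here concerns the homogeneous linear system $\dot x = {\rm A}x$ and requires nothing from that ISS analysis, so that remark could simply be dropped.
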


\begin{lemma}
\label{lem:Vb}
Under the same assumptions in Lemma~\ref{lem:Velyap}, the following bounds hold for estimation error ${e(t) \!=\! {\Delta}(t) \!-\! \hat{\Delta}(t)}$ and estimated uncertainty ${\hat{\Delta}(t)}$, respectively.
\begin{align}
    \|e(t)\| &\leq \mathcal{P}   \left ( \delta_b - 2 {\delta_L}{\| P \|} \right ) {\rm e}^{ -\left ( \dfrac{t}{2 \| P \|} \right ) }  + 2 \mathcal{P} {\| P \|} \delta_L   \\
    \|\hat{\Delta}(t)\| &\leq   2 \mathcal{P} \delta_b \| \Lambda \| { \| P \|}     \Bigg ( 1 -  {\rm e}^{ -\left ( \dfrac{t}{2 \| P \|} \right ) }  \Bigg  )
\end{align}
where ${P {\rm \Lambda} + {\rm \Lambda}^{T}P  = -I }$, and ${  \mathcal{P} \triangleq \sqrt{\| P^{-1}\| \| P \|}}$.
\end{lemma}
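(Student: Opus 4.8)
The plan is to obtain both bounds by explicitly solving the (linear, time-invariant) dynamics of the error $e$ and of the estimate $\hat{\Delta}$, and then controlling the resulting matrix exponential with Lemma~\ref{lem:expA}. Since $\Lambda \succ 0$, the matrix $-\Lambda$ is Hurwitz, so Lemma~\ref{lem:expA} applied to $-\Lambda$ (with $P$ the Hermitian solution of the stated Lyapunov equation) gives $\|\mathrm{e}^{-\Lambda t}\| \le \mathcal{P}\,\mathrm{e}^{-t/(2\|P\|)}$ with $\mathcal{P}=\sqrt{\|P^{-1}\|\|P\|}$. This exponential estimate is the common engine for both inequalities.

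First I would treat the error bound. Equation~\eqref{eq:dote} reads $\dot{e} = \dot{\Delta} - \Lambda e$, a stable linear system forced by $\dot{\Delta}$, so variation of constants yields $e(t) = \mathrm{e}^{-\Lambda t} e(0) + \int_0^t \mathrm{e}^{-\Lambda(t-s)}\dot{\Delta}(s)\,ds$. Because $\hat{\Delta}(0)=0$ we have $e(0)=\Delta(0)$, hence $\|e(0)\|\le \delta_b$ by Assumption~\ref{as1}; the same assumption supplies $\|\dot{\Delta}(s)\|\le \delta_L$ as already used in Lemma~\ref{lem:Velyap}. Taking norms, applying the exponential bound to both terms, and evaluating the elementary integral $\int_0^t \mathrm{e}^{-(t-s)/(2\|P\|)}\,ds = 2\|P\|\big(1-\mathrm{e}^{-t/(2\|P\|)}\big)$ gives $\|e(t)\| \le \mathcal{P}\delta_b\,\mathrm{e}^{-t/(2\|P\|)} + 2\mathcal{P}\|P\|\delta_L\big(1-\mathrm{e}^{-t/(2\|P\|)}\big)$; collecting the coefficients of $\mathrm{e}^{-t/(2\|P\|)}$ reproduces the claimed first inequality.

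Next I would bound $\hat{\Delta}$. The key step is to find a clean autonomous description of the estimate: differentiating $\hat{\Delta}=\Lambda x - \xi$ from \eqref{eq:bhat} and substituting the true dynamics \eqref{sysun}, \eqref{delta} together with \eqref{eq:xidot}, the nominal drift and input terms $\Lambda(\hat{f}+\hat{g}u)$ cancel and leave $\dot{\hat{\Delta}} = \Lambda e = -\Lambda\hat{\Delta} + \Lambda\Delta$. Thus $\hat{\Delta}$ is itself a stable filter, now driven by $\Lambda\Delta$ rather than by $\dot{\Delta}$, and with $\hat{\Delta}(0)=0$ its solution is $\hat{\Delta}(t) = \int_0^t \mathrm{e}^{-\Lambda(t-s)}\Lambda\Delta(s)\,ds$. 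Bounding $\|\Delta(s)\|\le\delta_b$ (again Assumption~\ref{as1}), pulling out $\|\Lambda\|$, and reusing the same exponential estimate and integral value yields $\|\hat{\Delta}(t)\| \le 2\mathcal{P}\delta_b\|\Lambda\|\|P\|\big(1-\mathrm{e}^{-t/(2\|P\|)}\big)$, the claimed second inequality.

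The main obstacle is the derivation in the last paragraph: it is essential to rewrite $\dot{\hat{\Delta}}=\Lambda e$ as the $\Delta$-driven system $-\Lambda\hat{\Delta}+\Lambda\Delta$, because integrating the already-derived bound on $\|e\|$ directly would accumulate the non-decaying term $2\mathcal{P}\|P\|\delta_L$ and produce a bound growing linearly in $t$ rather than the saturating expression sought. Recognizing the cancellation of the nominal terms and then driving the filter by the bounded signal $\Delta$ (rather than by $e$ or $\dot{\Delta}$) is what makes the second estimate bounded and tight. A minor point to handle carefully is the sign convention when invoking Lemma~\ref{lem:expA} for $-\Lambda$, so that $P$ is identified consistently with the Lyapunov equation in the statement.
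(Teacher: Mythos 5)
Your proof is correct and takes essentially the same route as the paper's: variation of constants on $\dot e = \dot\Delta - \Lambda e$ with $\|e_0\|\le\delta_b$ for the first bound, and the key rewriting $\dot{\hat\Delta} = \Lambda(\Delta-\hat\Delta)$ integrated from $\hat\Delta(0)=0$ (so the filter is driven by the bounded signal $\Delta$, not by $e$) for the second, with Lemma~\ref{lem:expA} controlling $\|\mathrm{e}^{-\Lambda t}\|$ throughout. Your closing remark on the sign convention is well taken: for $-\Lambda$ Hurwitz the Lyapunov equation should read $P(-\Lambda)+(-\Lambda)^{T}P=-I$, a sign the paper's statement glosses over.
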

\begin{proof}
From \eqref{sysun}, \eqref{delta}, \eqref{eq:bhat}, \eqref{eq:xidot} and \eqref{eq:error}, we have
\begin{align}
    \dot{\hat{\Delta}}(t) &= \Lambda (\Delta(x, u) - \hat{\Delta}(t)).
    \label{dotb}
\end{align}
Under Assumption~\ref{as1}, we have ${\| \Delta(x, u) \| \leq \delta_b}$, and ${\|\Dot{\Delta}(t) \|\leq \delta_L}$. Since we set ${\hat{\Delta}(0) = {\bf 0}}$, the initial error will be bounded by ${\| e_0 \| \leq \delta_b}$. 
The solution to ${e(t)}$ in \eqref{eq:dote} with the state transition matrix ${ {\rm e}^{-\Lambda t}}$ and initial condition ${e_0}$
is given by
\begin{align*}
    {e}(t) &=  {\rm e}^{-\Lambda t} e_0 + \int_{0}^{t}  {\rm e}^{-\Lambda (t-\tau)} \dot{{\Delta}}(\tau) d \tau   .
\end{align*}
Taking norms and integrating the right-hand side yield
\begin{align*}
     \| {e}(t) \| &\leq \big \|{\rm e}^{-\Lambda t} \big \|  \underbrace{\|  e_0 \|}_{\leq \delta_b} + \int_{0}^{t} \big \| {\rm e}^{-\Lambda (t-\tau)} \big \| \underbrace{\| \dot{{\Delta}}(\tau) \|}_{\leq \delta_L} d \tau \\
     &\leq \delta_b \mathcal{P}  {\rm e}^{ -\left ( \dfrac{t}{2 \| P \|} \right ) } + \delta_L \int_{0}^{t} \mathcal{P}  {\rm e}^{ -\left ( \dfrac{t - \tau}{2 \| P \|} \right ) } d \tau \\
     &= \mathcal{P}\left( \left ( \delta_b - 2 {\delta_L}{\| P \|} \right ) {\rm e}^{ -\left ( \dfrac{t}{2 \| P \|} \right ) }  + 2 {\| P \|} \delta_L  \right)  ,
\end{align*}
which yields the first theorem statement. The second inequality follows from Lemma~\ref{lem:expA}. Note that the derived bound is in the form of an input-to-state stable system requirement:
\begin{align*}
    \|e(t)\| &\leq \beta(\|e_0\|, t) + \gamma(\| \delta_L \|) ,
\end{align*}
where ${ \beta \in \mathcal{KL}, \gamma \in \mathcal{K}}$.

To derive the second statement of the theorem, we consider the integration of \eqref{dotb} with ${\hat{\Delta}(0) = {\bf 0}}$:
\begin{align*}
    {\hat{\Delta}}(t) &=  {\rm e}^{-\Lambda t} \underbrace{\hat{\Delta}(0)}_{0} + \int_{0}^{t}  {\rm e}^{-\Lambda (t-\tau)} \Lambda {{\Delta}}(\tau) d \tau   \\
    &= \int_{0}^{t}  {\rm e}^{-\Lambda (t-\tau)} \Lambda {{\Delta}}(\tau) d \tau  
\end{align*}
Taking norms and integrating the right-hand side yield
\begin{align*}
    \| {\hat{\Delta}}(t) \| & \leq \delta_b \| \Lambda \|  \int_{0}^{t} \Big \| {\rm e}^{-\Lambda (t-\tau)}  \Big \| d \tau  \\
    & \leq \mathcal{P} \delta_b \| \Lambda \| \int_{0}^{t}  {\rm e}^{ -\left ( \dfrac{t - \tau}{2 \| P \|} \right ) } d \tau \\
    &=  2 \mathcal{P} \delta_b \| \Lambda \| { \| P \|}   \Bigg ( 1 -  {\rm e}^{ -\left ( \dfrac{t}{2 \| P \|} \right ) }  \Bigg  )
\end{align*}
which is the second statement of the theorem.
\end{proof}

\begin{remark}
The first item of Lemma~\ref{lem:Vb} states the \textit{input-to-state stability} of the observer error dynamics around ${e=0}$, and it consists of transient and steady-state parts. The second item shows that if the uncertainty and its derivative are bounded, then the proposed estimator results in a bounded output.    
\end{remark}

Lemma~\ref{lem:Velyap} presents a property of the estimator's  time derivative, leveraging Lyapunov functions, while Lemma~\ref{lem:Vb} presents the time-dependent convergence of the estimator. The following section introduces two methods to assert safety with the estimator \eqref{eq:bhat}-\eqref{eq:xidot} using these results.

\section{Main Results}
To guarantee the robustness of CBFs, we must incorporate the uncertainty ${\Delta(x,u)}$, but it is not directly accessible in actual implementations. A possible solution to this issue is to replace ${\Delta(x,u)}$ with the estimated uncertainty term ${\hat{\Delta}(t)}$ and the lower bound of the associated estimation error ${\|{e}(t)\|}$. 

\subsection{Method 1}
This section uses the proposed estimator's active uncertainty attenuation capability to reject uncertainty via input augmentation. We also show that this method provides a performance improvement on top of guaranteed robust safety.
 
The \textit{disturbance decoupling problem \cite{sastry2013nonlinear} (Theorem, 9.20)} states that finding a state feedback controller for system \eqref{sysun}, such that the disturbance has no effect on the system output is possible iff IRD = DRD. Therefore, most disturbance observer-based robust control methods apply only to the compensation of matched disturbance inputs. While a few studies have proposed nonlinear disturbance observers that eliminate system output disturbances, they ignore the transient dynamics under disturbance effects \cite{yang2012} or consider specific classes of nonlinear systems, e.g., lower triangular systems. Therefore, for simplicity, we assume {IRD = DRD = 1} and uncertainty satisfies the matching condition. 

We utilize Lemma~\ref{lem:Velyap} in this section. 
Suppose that there exists a matrix ${Q \in \mathbb{R}^{m \times n}}$ that satisfies 
\begin{equation}
\hat{g}(x) Q(x) \Delta(x, u) = \Delta(x, u), ~ \forall t \geq 0, 
\end{equation}
which matches the lumped uncertainty into the input channel:
\begin{align}
\begin{split}
\label{eq:distd}
    \dot{x} & = \hat{f}(x) + \hat{g}(x) u +\Delta(x, u) \\
   &\equiv \hat{f}(x) + \hat{g}(x) (u + Q(x) \Delta(x, u)).
\end{split}
\end{align}
Finding a valid matrix $Q(x)$ for equality (\ref{eq:distd}) is equivalent to decoupling the disturbance from the system states, which is a well-studied problem, \cite{isidori1985}. However, to our knowledge, no formulation exists for general MIMO nonlinear systems. 

While a general formula is beyond the scope of this study, we define a useful matrix ${Q(x)}$.
If ${rank (\hat{g}(x)) = m}$, for all ${x \in X}$, and its pseudo-inverse ${(\hat{g}(x)^{T} \hat{g} (x))^{-1} \hat{g}^{T} (x)}$ exists for all ${x \in X}$, then 
\begin{equation}
    \label{eq:qx}
    Q(x) \triangleq (\hat{g}(x)^{T} \hat{g} (x))^{-1} \hat{g}^{T} (x).
\end{equation}

We revisit the ACC problem of Example~\ref{ex:setup} to demonstrate how to find a ${Q(x)}$ matrix:
\begin{align}
\begin{split}
\underbrace{ \begin{bmatrix}
 \frac{1}{M} \\ 
0
\end{bmatrix} }_{\hat{g}(x)} 
Q(x) &
\underbrace{ \begin{bmatrix}
\delta_{F_r,M} + \delta_{M^{-1}} u \\ 
0
\end{bmatrix}}_{\Delta(x, u)}
= \begin{bmatrix}
\delta_{F_r,M} + \delta_{M^{-1}} u \\ 
0
\end{bmatrix} , \\
\implies Q(x) &= \left ( \begin{bmatrix}
\dfrac{1}{M} & 0
\end{bmatrix} 
 \begin{bmatrix}
\dfrac{1}{M} \\ 
0
\end{bmatrix}
\right )^{-1}   \begin{bmatrix}
\dfrac{1}{M} & 0
\end{bmatrix}  \\
&= 
\begin{bmatrix}
M & 0 
\end{bmatrix}.
\end{split}
\end{align}

Since Lemma~\ref{lem:Velyap} holds for the uncertainty estimation error ${e(t)}$ for the proposed estimator with Lyapunov function ${V_e(e) = \frac{1}{2} e^Te}$, we have ${\Dot{V}_e(e, \dot{e})  \leq -2 \mu_e V_e + \gamma(\| \delta_L \|)}$, which is given in \eqref{eq:er_iss}. Using this property, we incorporate the uncertainty estimator \eqref{eq:bhat}, \eqref{eq:xidot} into the CBF setup with the augmentation of the given CBF and the control input. 

To compensate for the uncertainty, we incorporate the uncertainty estimator into the feedback control law as
\begin{align}
    \label{eq:u_hat}
    u = \tilde{u}  - \underbrace{Q(x) \hat{\Delta}}_{u_{\hat{\Delta}}}, 
\end{align}
where ${\tilde{u} = \mathbf{k} (x)}$ is the feedback controller and parameter ${u_{\hat{\Delta}}}$ is for the uncertainty attenuation (See Fig.~\ref{fig:scheme} top). Then, we obtain the closed-loop dynamics:
\begin{align}
\begin{split}
\label{syscon}
    \dot{x}  &= \hat{f}(x) + \hat{g}(x) \left(\tilde{u}  - u_{\hat{\Delta}}\right) + \Delta(x, u) \\
     & = \hat{f}(x) + \hat{g}(x) \tilde{u} + \underbrace{\Delta(t) - \hat{\Delta}(t)}_{e(t)}, 
\end{split}
\end{align}
which depends on the uncertainty estimation error. 

Then, inspired by \cite{molnar2021model}, we modify the CBF and its associated safe set to provide robustness against estimation error $e(t)$:
\begin{equation}
\label{hecbf}
    h_V (x, e)  = h(x) - \sigma_V V_e(e) , ~ \sigma_V \in \mathbb{R}^+ . 
\end{equation}
\begin{equation}
\label{hexf}
    \C_V := \left\{ x\in X  ~|~ h_V(x, e)\geq 0 \right\} , 
\end{equation}
which is a subset of the safe set ${\C := \left\{ x\in X  ~|~ h(x)\geq 0 \right\}}$ since ${V_e(e)}$ is a Lyapunov function.  The following theorem relates the controllers designed for the safety of the nominal system to the safety of the uncertain system by considering the estimator dynamics.
Note for the following theorem; we use ${-\alpha (h(x)) = -\alpha h(x),~\alpha \in \mathbb{R}^+}$ for simplicity, and we drop the arguments of functions for convenience in the proof.
\begin{theorem}
\label{theo:main}
Consider the uncertain system \eqref{sysun}, a valid CBF $h(x)$ defining the set $\C$ as ${\C := \left\{ x\in X  ~|~ h(x)\geq 0 \right\}}$ such that ${\frac{\partial h}{\partial x} \neq 0}$, ${\forall x \in \C}$, with a continuously differentiable uncertainty function ${\Delta(x,u)}$ defined by \eqref{delta} that satisfies Assumption~\ref{as1} with a Lipschitz constant ${\delta_L}$, upper bound ${\delta_b}$, the uncertainty estimator that satisfies \eqref{eq:er_iss} and ${\hat{\Delta}(0) = {\bf 0}}$, and a Lipschitz continuous controller ${u}$. If there exist a constant ${\mu_h \in \mathbb{R}^+}$ such that 
\begin{equation}
    \label{eq:d_muh}
    \mathcal{D} \triangleq 4\sigma_{V}\mu _e - 2\sigma_V \mu_h > 0,
\end{equation}
and the following condition holds
\begin{equation}
\label{heineq}
    L_{\hat{f}} h(x) + L_{\hat{g}} h(x) \tilde{u}    \geq -\mu_h h(x) + \dfrac{ \norm{\dfrac{\partial h }{\partial x}} ^2 }{\mathcal{D} } + \sigma_V \gamma(\| \delta_L \|) ,
\end{equation}
then ${x_0 \in \C \implies x(t) \in \C}$.
\end{theorem}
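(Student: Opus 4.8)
The plan is to treat the robustified function $h_V(x,e) = h(x) - \sigma_V V_e(e)$ as a barrier function on the composite state $(x,e)$ and to establish the differential inequality $\dot{h}_V \geq -\mu_h h_V$ along the compensated closed loop \eqref{syscon}. Since $V_e(e) = \tfrac{1}{2}e^\top e \geq 0$, the set $\mathcal{C}_V$ in \eqref{hexf} is contained in $\mathcal{C}$, so once $\mathcal{C}_V$ is shown forward invariant, safety of $\mathcal{C}$ follows from the inclusion. The comparison lemma then closes the argument, since $\dot{h}_V \geq -\mu_h h_V$ yields $h_V(t) \geq h_V(0)\,\mathrm{e}^{-\mu_h t}$, so a nonnegative initial value of $h_V$ propagates forward.

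First I would differentiate $h_V$ along \eqref{syscon}. Because input augmentation \eqref{eq:u_hat} reduces the closed loop to $\dot{x} = \hat{f}(x) + \hat{g}(x)\tilde{u} + e(t)$, the chain rule gives $\dot{h} = L_{\hat{f}}h + L_{\hat{g}}h\,\tilde{u} + \frac{\partial h}{\partial x}e$, where the last term is the residual effect of the estimation error. For the Lyapunov term I would invoke the ISS estimate, written as $\dot{V}_e \leq -2\mu_e V_e + \gamma(\|\delta_L\|)$ from \eqref{eq:er_iss}, to obtain the lower bound $-\sigma_V \dot{V}_e \geq \sigma_V\mu_e\|e\|^2 - \sigma_V\gamma(\|\delta_L\|)$. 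Collecting terms gives
\[
\dot{h}_V \geq L_{\hat{f}}h + L_{\hat{g}}h\,\tilde{u} + \tfrac{\partial h}{\partial x}e + \sigma_V\mu_e\|e\|^2 - \sigma_V\gamma(\|\delta_L\|).
\]

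The main obstacle is the indefinite cross term $\frac{\partial h}{\partial x}e$, which may be negative and is not directly controllable. The key step is to dominate it by the positive quadratic in $\|e\|$ supplied by the Lyapunov term. Writing the target $\dot{h}_V \geq -\mu_h h_V = -\mu_h h + \tfrac{\mu_h \sigma_V}{2}\|e\|^2$ and moving everything to one side, the $e$-dependent part becomes $\frac{\partial h}{\partial x}e + \tfrac{\mathcal{D}}{4}\|e\|^2$, where $\tfrac{\mathcal{D}}{4} = \sigma_V\mu_e - \tfrac{\sigma_V\mu_h}{2} > 0$ exactly by hypothesis \eqref{eq:d_muh}. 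A completion of squares (equivalently Young's inequality) then yields $\frac{\partial h}{\partial x}e + \tfrac{\mathcal{D}}{4}\|e\|^2 \geq -\|\frac{\partial h}{\partial x}\|^2/\mathcal{D}$, and positivity of $\mathcal{D}$ is precisely what makes this quadratic bounded below.

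Substituting this bound, the remaining state-dependent requirement is $L_{\hat{f}}h + L_{\hat{g}}h\,\tilde{u} + \mu_h h - \sigma_V\gamma(\|\delta_L\|) - \|\frac{\partial h}{\partial x}\|^2/\mathcal{D} \geq 0$, which is literally the assumed condition \eqref{heineq}. Hence $\dot{h}_V \geq -\mu_h h_V$, and the comparison lemma keeps $h_V$ nonnegative, so $x(t) \in \mathcal{C}_V \subseteq \mathcal{C}$. I would flag that the cleanest statement is forward invariance of the inner set $\mathcal{C}_V$, with the claim for $\mathcal{C}$ following from the inclusion; the initial gap between the two sets is governed by $e(0) = \Delta(0)$, since $\hat{\Delta}(0) = \mathbf{0}$.
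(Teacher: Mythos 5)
Your proposal is correct and follows essentially the same route as the paper: the same robustified function $h_V(x,e)=h(x)-\sigma_V V_e(e)$, the same substitution of the compensated input \eqref{eq:u_hat} to reduce the residual to $\frac{\partial h}{\partial x}e$, the same completion of squares with coefficient $\mathcal{D}/4$ absorbing the cross term, and the same comparison-lemma conclusion via $\dot{h}_V \geq -\mu_h h_V$ and the inclusion $\mathcal{C}_V \subseteq \mathcal{C}$. Your closing remark is a fair observation: strictly the argument certifies invariance of $\mathcal{C}_V$ (i.e., it needs $h_V(0)=h(x_0)-\sigma_V V_e(e(0))\geq 0$, with $e(0)=\Delta(0)$ since $\hat{\Delta}(0)=\mathbf{0}$), a point the paper's proof passes over silently.
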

\begin{proof}
Our goal is to obtain ${\dot{h}_V \geq  -\mu_h h_V, ~ \forall t \geq 0}$, which implies ${\dot{h} \geq  -\mu_h h, ~ \forall t \geq 0}$. The time derivative of ${h_V(x, e)}$ given in \eqref{hecbf} satisfies:
\begin{align*}
  \dot{h}_V &=  \dot{h} - \sigma_V \dot{V}_e \\ &\geq \dot{h} - \sigma_V (-\mu_e e^2 + \gamma(\| \delta_L \|) ) \\
    &= L_{\hat{f}} h(x) + L_{\hat{g}} h(x) {u} + \frac{\partial h}{\partial x} \Delta + \sigma_V \mu_e e^2 - \sigma_V \gamma(\| \delta_L \|) \\
     &=  L_{\hat{f}} h(x) + L_{\hat{g}} h(x) \tilde{u} + \frac{\partial h}{\partial x} \underbrace{({\Delta} - \hat{\Delta})}_{:=e} - \sigma_V \gamma(\| \delta_L \|) \\ &~~+ \left ( \sigma_V \mu_e -  \frac{\sigma_V\mu_h}{2} \right ) e^2 + \frac{\sigma_V\mu_h}{2} e^2   \\
     &= L_{\hat{f}} h(x) + L_{\hat{g}} h(x) \tilde{u} - \sigma_V \gamma(\| \delta_L \|)     \\
     &~~+ \left ( \frac{ \sqrt{\mathcal{D}}e}{2} +  \dfrac{ {\frac{\partial h}{\partial x}} }{\sqrt{\mathcal{D}}}  \right )^2 - \dfrac{ \norm{\frac{\partial h}{\partial x}} ^2 }{\mathcal{D}} + \frac{\sigma_V\mu_h}{2} e^2  \\
     &\geq \underbrace{L_{\hat{f}} h(x)\! +\! L_{\hat{g}} h(x) \tilde{u}  \! -\! \dfrac{ \norm{\frac{\partial h}{\partial x}} ^2 }{\mathcal{D}} \!-\! \sigma_V \gamma(\| \delta_L \|)}_{\geq -\mu_h h(x)} \! + \frac{\sigma_V \mu_h e^2}{2} \\
     &\implies \dot{h}_V \geq -\mu_h h(x) +\frac{\sigma_V \mu_h e^2}{2} = -\mu_h h_V
\end{align*}
This leads to
${h(x(t)) \geq 0}$, that is, ${x(t) \in \C}$, ${\forall t \geq 0}$.
\end{proof}
The second proof line uses a lower bound for ${V_e}$. In the fourth line, we replaced $u$ with ${\tilde{u} - u_{\hat{\Delta}}}$ (see \eqref{eq:u_hat}) to obtain a term that explicitly depends on ${e}$.  In the next line, we use a square to define a new lower bound. Finally, using the condition in the theorem, we obtain the main statement.

Then, given an $h(x)$, $\mu_h \in \mathcal{K}_{\infty, e}$, ${\sigma_V}$ for system (\ref{sysun}), we define the set of robust safe controllers as
\begin{equation}
\label{CBFkx} 
    K_\text{UE}(x) \triangleq \left\{ \tilde{u} \in U  \big |  L_{\hat{f}} h(x) + L_{\hat{g}} h(x) \tilde{u}    \geq \mathcal{S}(x) \right\},
\end{equation}
where ${\mathcal{S}(x) = -\mu_h h(x) + \dfrac{ \norm{\dfrac{\partial h}{\partial x}} ^2 }{\mathcal{D}} + \sigma_V \gamma(\| \delta_L \|)}$.

Finally, we have the following robust CBF-QP:
\begin{argmini*}|s|
{\tilde{u}  \in U }{\|\tilde{u} -\mathbf{k_d}(x) \|^2 }
{\label{robCBF0}}
{\tilde{u} ^*({x})=}
\addConstraint{L_{\hat{f}} h(x) + L_{\hat{g}} h(x) \tilde{u}    \geq \mathcal{S}(x)  }
\end{argmini*}


\subsection{Method 2}
Method 1 composed the estimated signals and the feedback controller. Even though Method 1 is useful for disturbance attenuation, it may not be useful for high dimensional systems as it depends on the existence of matrix ${Q(x)}$. Furthermore, Method 1 is mainly aimed at CBFs with relative degree one.  This section proposes an alternative method to address model uncertainty for HOCBFs by integrating the proposed estimator with CBFs only for the safety condition.

We construct a robust safety constraint using the estimated part of the uncertainty and an associated error bound inspired by \cite{dacs2022robust,cheng2022safe}. Specifically, we utilize the proposed estimator with the main outputs of Lemma~\ref{lem:Vb} to robustify the given CBFs for the nominal systems. We incorporate the uncertainty estimator into the CBF constraint \eqref{eq:hdot} by plugging in for ${\Delta(x, u)}$:
\begin{align}
    \label{eq:Khat_CBF}
    \!\!\!{\dot{h}(x, u)}\!  =\!
   \dfrac{\partial h}{\partial x} ( \hat{f}(x)\! +\! \hat{g} (x) u ) \! +\! \frac{\partial h}{\partial x} \!\underbrace{ {\Delta(x, u)}}_{=  \hat{\Delta} + e } \!\geq\! -\alpha (h(x)) .
\end{align}
Due to uncertainty in the estimation error, we can not directly replace ${\Delta(x, u)}$ in \eqref{eq:Khat_CBF} with ${ \hat{\Delta} + e}$, which requires the exact information about the estimation error ${e}$ and the uncertainty. Therefore, to compensate for the uncertainty in \eqref{eq:Khat_CBF}, we use the derived time-dependent estimator bounds.
\begin{theorem}
\label{theo:met2}
Under the same assumptions of {Theorem~\ref{theo:main}}, if ${\frac{\partial h}{\partial x} \neq 0}$, ${\forall x \in \C}$ and there exists ${\alpha \in \mathcal{K}_{\infty, e}}$ such that ${\forall x \in \mathcal{C}}$
\begin{align} \label{eq:thm2_set}
\!\!\!\sup_{u \in U}  \!\Bigg (\! \dfrac{\partial h}{\partial x} ( \hat{f}(x) \!+\! \hat{g} (x) u  \!+\!  \hat{\Delta} ) \!-\! \norm{\frac{\partial h}{\partial x}} \!\|e\|  \!\Bigg ) \!\geq\! -\alpha (h(x)),
\end{align}
then ${x_0 \in \C \implies x(t) \in \C}$.
\end{theorem}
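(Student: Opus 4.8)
The plan is to show that satisfying the robustified constraint \eqref{eq:thm2_set} forces the \emph{true} CBF derivative along the uncertain system to obey the standard CBF inequality $\dot{h}(x,u) \geq -\alpha(h(x))$, after which forward invariance of $\C$ follows immediately from Theorem~\ref{teo:cbfdef}. The whole argument is a pointwise lower-bounding of $\dot h$, so the structure mirrors the nominal CBF proof rather than requiring a fresh Lyapunov/comparison analysis.

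First I would decompose the true uncertainty using the estimation error from \eqref{eq:error}, writing $\Delta(x,u) = \hat{\Delta}(t) + e(t)$. Substituting this into the expression \eqref{eq:hdot} for the time derivative of $h$ along the uncertain system \eqref{sysun} yields
\begin{align*}
    \dot{h}(x,u) = \derp{h}{x}\left(\hat{f}(x) + \hat{g}(x) u + \hat{\Delta}\right) + \derp{h}{x} e .
\end{align*}
The only term unavailable to the controller is $\derp{h}{x} e$, since $e(t)$ is the unknown estimation error. I would lower-bound it by the Cauchy--Schwarz inequality, $\derp{h}{x} e \geq -\left|\derp{h}{x} e\right| \geq -\norm{\derp{h}{x}}\|e\|$, which gives
\begin{align*}
    \dot{h}(x,u) \geq \derp{h}{x}\left(\hat{f}(x) + \hat{g}(x) u + \hat{\Delta}\right) - \norm{\derp{h}{x}}\|e\| .
\end{align*}
The right-hand side is exactly the argument of the supremum in \eqref{eq:thm2_set}. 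Hence any Lipschitz continuous controller $u$ satisfying the constraint guarantees $\dot{h}(x,u) \geq -\alpha(h(x))$ for all $x \in \C$, and the feasibility asserted by the supremum condition ensures such a controller exists. Invoking Theorem~\ref{teo:cbfdef} with extended class-$\mathcal{K}$ function $\alpha$ then delivers $x_0 \in \C \implies x(t) \in \C$.

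The step deserving the most care — rather than a deep obstacle — is the treatment of the unknown error $e(t)$. Because $\|e(t)\|$ is not measurable online, the constraint \eqref{eq:thm2_set} is only implementable once $\|e(t)\|$ is replaced by the explicit, certified time-dependent upper bound furnished by Lemma~\ref{lem:Vb}. I would therefore remark that substituting that bound for $\|e(t)\|$ preserves the chain of inequalities: the bound only enlarges the subtracted robustifying term, making the constraint strictly more conservative, so feasibility of the bound-based constraint still implies $\dot{h} \geq -\alpha(h)$ and hence safety. With that caveat noted, the Cauchy--Schwarz bound and the appeal to the CBF forward-invariance theorem are routine, and no additional regularity beyond the standing Assumption~\ref{as1} and $\derp{h}{x}\neq 0$ on $\C$ is needed.
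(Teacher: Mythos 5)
Your proof is correct and takes essentially the same route as the paper's: the same decomposition $\Delta(x,u) = \hat{\Delta} + e$ via \eqref{eq:error}, the same Cauchy--Schwarz lower bound $\derp{h}{x} e \geq -\norm{\derp{h}{x}}\|e\|$ applied to \eqref{eq:hdot}, and the same conclusion that \eqref{eq:thm2_set} forces $\dot{h}(x,u) \geq -\alpha(h(x))$ and hence forward invariance of $\C$. Your closing remark that the unmeasurable $\|e(t)\|$ must be replaced by the certified bound of Lemma~\ref{lem:Vb}, which only makes the constraint more conservative, is consistent with how the paper intends the condition to be implemented.
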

\begin{proof} A lower bound for the time derivative of $h(x)$ can be derived using \eqref{eq:error}, \eqref{eq:hdot} and \eqref{eq:Khat_CBF} as
\begin{align}
\label{pr1}
   \!\!\!{\dot{h}(x, u)}  &\! =\!  \dfrac{\partial h}{\partial x} ( \hat{f}(x) + \hat{g} (x) u ) + \frac{\partial h}{\partial x} ({\hat{\Delta} + e}) \\
   & \!\geq\! L_{\hat{f}} h(x) + L_{\hat{g}} h(x) u + \frac{\partial h}{\partial x} {\hat{\Delta} - {\norm{\frac{\partial h}{\partial x}} \|e\| }},
   \label{eq:the2}
\end{align}
where the term ${\norm{\frac{\partial h}{\partial x}} \|e(t)\| \geq 0}$ is to provide a lower bound against the effect of the estimation error $e(t)$ on the CBF constraint, which results in ${x(t) \in \C}$, ${\forall t \geq 0}$. 
\end{proof}

Method 2 also applies to HOCBFs, as supported by the following corollary.   
\begin{corollary}
\label{theo:met3}
Consider a sufficiently differentiable and valid HOCBF ${h(x)}$ for the uncertain model \eqref{sysun} with a sequence of functions ${\phi_i : X \to \mathbb{R}, i = 1,2, \ldots , m } $, and assume that {IRD = DRD = $m$}. Also consider a continuously differentiable function ${\Delta(x,u)}$ defined by \eqref{delta} that satisfies Assumption~\ref{as1} with a Lipschitz constant ${\delta_L}$, upper bound ${\delta_b}$, the uncertainty estimator that satisfies \eqref{eq:er_iss} and ${\hat{\Delta}(0) = {\bf 0}}$, and a Lipschitz continuous controller ${u}$. Then, ${h(x)}$ is a HOCBF if there exists ${\alpha_m \in \mathcal{K}_{\infty}}$ and ${\alpha_i \in \mathcal{K}_\infty}$ such that ${\forall x \in {\mathcal{C}_1 \intersect \mathcal{C}_2 \intersect \ldots \mathcal{C}_m }}$: 
\begin{align}
\begin{split}
\label{eq:the_hocbf}
   \sup_{u \in U}  \Big ( L_{\hat{f}}^m h(x) + L_{\hat{g}} L_{\hat{f}}^{m-1} h(x) u + \derp{L_{\hat{f}}^{m-1} h(x)}{x} \hat{\Delta} \\
  - \Bigg \|{\frac{L_{\hat{f}}^{m\!-\!1}\! h(x)}{\partial x} } \Bigg \| \|e(t) \| \!+\!
   \mathcal{O}(h(x)) \Big )
    \!\geq\! -\alpha_m (h_{m-1}(x)) ,
\end{split}
\end{align}
\end{corollary}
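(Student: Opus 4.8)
The plan is to reduce the statement to the \emph{nominal} HOCBF condition~\eqref{hocbf} for the \emph{actual} system, by first showing that the matching assumption IRD = DRD = $m$ confines the augmented uncertainty $\Delta(x,u)$ to the $m$-th level of the Lie-derivative tower of $h$, and then controlling that single term exactly as in the proof of Theorem~\ref{theo:met2}. Once the HOCBF inequality is verified, forward invariance of $\mathcal{C}_1 \cap \cdots \cap \mathcal{C}_m$ follows from the standard result \cite{xiao2019control}.

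First I would establish that the functions $\phi_0,\ldots,\phi_{m-1}$ and the correction term $\mathcal{O}(h(x))$ coincide for the nominal system~\eqref{sysnom} and the uncertain system~\eqref{sysun}. Differentiating $h$ along the actual dynamics $\dot{x} = \hat{f}(x) + \hat{g}(x)u + \Delta(x,u)$, each time derivative of order $k \le m-1$ produces a term $\derp{L_{\hat{f}}^{k-1} h(x)}{x}\Delta(x,u)$; by the DRD condition~\eqref{ird_drd} with $v = m$ every such term vanishes, and by the IRD condition~\eqref{ird_ird} the input $u$ is likewise absent. Hence $L_{\hat{f}}^k h$ is the true $k$-th derivative of $h$ for all $k \le m-1$, each $\phi_i$ is a function of $x$ alone, and the class-$\mathcal{K}$ contributions gathered in $\mathcal{O}(h(x))$ carry no $\Delta$ dependence.

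Next I would compute the $m$-th derivative along the actual dynamics, at which level both the control and the uncertainty finally appear, giving the HOCBF functional
\begin{equation*}
L_{\hat{f}}^m h(x) + L_{\hat{g}} L_{\hat{f}}^{m-1} h(x)\, u + \derp{L_{\hat{f}}^{m-1} h(x)}{x}\Delta(x,u) + \mathcal{O}(h(x)),
\end{equation*}
which differs from the nominal expression in~\eqref{hocbf} only through the single term $\derp{L_{\hat{f}}^{m-1} h(x)}{x}\Delta(x,u)$. Substituting $\Delta = \hat{\Delta} + e$ from~\eqref{eq:error} and applying Cauchy--Schwarz to the error part, $\derp{L_{\hat{f}}^{m-1} h}{x}\, e \ge -\norm{\derp{L_{\hat{f}}^{m-1} h}{x}}\,\|e(t)\|$, produces a lower bound for this functional whose right-hand side is precisely the bracketed expression in~\eqref{eq:the_hocbf}. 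Therefore, if some $u \in U$ satisfies~\eqref{eq:the_hocbf}, then the actual HOCBF functional is bounded below by $-\alpha_m(h_{m-1}(x))$, so $h$ satisfies the nominal HOCBF condition for the uncertain system and $x_0 \in \mathcal{C} \Rightarrow x(t) \in \mathcal{C}$ follows.

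The main obstacle is the first step: one must argue carefully that the matching assumption annihilates $\Delta$ not only in the raw derivatives $L_{\hat{f}}^k h$ but also inside every repeated Lie derivative hidden in $\mathcal{O}(h(x)) = \sum_{i=1}^{m-1} L_{\hat{f}}^i(\alpha_{m-i} \circ \phi_{m-i-1})$, since each such term is generated by differentiating $\alpha_{m-i} \circ \phi_{m-i-1}$ along the full vector field. Because each $\phi_{m-i-1}$ has order at most $m-2$, the relevant gradients annihilate $\Delta$ by~\eqref{ird_drd}, so no uncertainty leaks into $\mathcal{O}(h(x))$; verifying this cleanly is what legitimizes the reduction to Theorem~\ref{theo:met2}, after which the error-bound argument is identical.
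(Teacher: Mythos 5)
Your proposal is correct and follows essentially the same route as the paper, whose proof of this corollary is a single sentence deferring to Theorem~\ref{theo:met2}: substitute ${\Delta = \hat{\Delta} + e}$ at the top-level Lie derivative and lower-bound the error term via Cauchy--Schwarz. You in fact supply more than the paper does — in particular the careful verification that the matching condition IRD = DRD = $m$ keeps $\Delta$ out of the lower-order derivatives and out of ${\mathcal{O}(h(x))}$, which the paper leaves implicit — so your write-up is a legitimate filling-in of the paper's terse argument rather than a departure from it.
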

\begin{proof} 
Follows from Theorem~\ref{theo:met2} by replacing the uncertainty term with the associated estimation and error bound.
\end{proof}

\subsection{Discussion}
One of the differences between the proposed methods is that Method 2 explicitly depends on the estimation error, whereas in Method 1, we consider a subset of the original safe set with a design parameter $\sigma_V$. As can be understood from \eqref{hecbf}, choosing a larger $\sigma_V$ may lead to a conservative safe set. Therefore, the tuning process for this parameter is a crucial step for Method 1. On the other hand, there is no additional tuning parameter for Method 2. To take this advantage and leverages the idea of uncertainty compensation, we can incorporate the estimation error dynamics into the CBF constraint for the closed-loop system \eqref{syscon} with the following corollary.  
\begin{corollary}
\label{corol:met3}
Under the same assumptions of {Theorem~\ref{theo:main}}, if ${\frac{\partial h}{\partial x} \neq 0}$ ${\forall x \in \C}$ and there exists ${\alpha \in \mathcal{K}_{\infty, e}}$ such that ${\forall x \in \mathcal{C}}$
\begin{align} \label{eq:thm4_set}
\!\!\!\sup_{\tilde{u} \in U}  \!\Bigg (\! \dfrac{\partial h}{\partial x} ( \hat{f}(x) \!+\! \hat{g} (x) \tilde{u} )  \!-\! \norm{\frac{\partial h}{\partial x}} \!\|e\|  \!\Bigg ) \!\geq\! -\alpha (h(x)),
\end{align}
then ${x_0 \in \C \implies x(t) \in \C}$.
\end{corollary}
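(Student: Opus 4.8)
The plan is to recognize that this corollary is nothing more than Theorem~\ref{theo:met2} re-derived along the \emph{closed-loop} dynamics \eqref{syscon} produced by the Method~1 input augmentation \eqref{eq:u_hat}, rather than along the open-loop uncertain dynamics \eqref{sysun}. The key observation is that applying the compensating input $u = \tilde{u} - Q(x)\hat{\Delta}$ cancels the \emph{estimated} part of the matched uncertainty, so that the residual disturbance acting on the state is the estimation error $e(t) = \Delta(t) - \hat{\Delta}(t)$ alone, exactly as already established in \eqref{syscon}. Consequently the term $\derp{h}{x}\hat{\Delta}$ that appears in the Method~2 constraint \eqref{eq:thm2_set} is absent here, which is precisely the structural difference between \eqref{eq:thm2_set} and \eqref{eq:thm4_set}.

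First I would compute the time derivative of $h$ along \eqref{syscon}, obtaining
\begin{align*}
\dot{h}(x,u) = L_{\hat{f}} h(x) + L_{\hat{g}} h(x)\tilde{u} + \derp{h}{x} e(t).
\end{align*}
Next I would lower-bound the disturbance term by Cauchy--Schwarz, $\derp{h}{x} e(t) \geq -\norm{\derp{h}{x}}\,\|e(t)\|$, mirroring the single inequality used in the proof of Theorem~\ref{theo:met2}. Substituting this bound yields
\begin{align*}
\dot{h}(x,u) \geq L_{\hat{f}} h(x) + L_{\hat{g}} h(x)\tilde{u} - \norm{\derp{h}{x}}\,\|e(t)\|.
\end{align*}
Invoking the standing hypothesis \eqref{eq:thm4_set}, the right-hand side is at least $-\alpha(h(x))$, so $\dot{h}(x,u) \geq -\alpha(h(x))$ for all $x \in \C$; Theorem~\ref{teo:cbfdef} then delivers forward invariance of $\C$, i.e.\ $x_0 \in \C \implies x(t) \in \C$.

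The main obstacle is not the CBF inequality itself---that step is a one-line Cauchy--Schwarz argument identical to Theorem~\ref{theo:met2}---but rather justifying the passage to the closed-loop form \eqref{syscon} on which the whole estimate rests. This requires that the compensation genuinely removes $\hat{\Delta}$ from the dynamics, i.e.\ that $\hat{g}(x) Q(x)\hat{\Delta} = \hat{\Delta}$, which is inherited from the matching condition \eqref{eq:distd} and the choice of $Q(x)$ in \eqref{eq:qx} under the matched-uncertainty assumption (IRD~=~DRD~=~1) carried over from Theorem~\ref{theo:main}. I would therefore state explicitly at the outset that the corollary operates under the compensated controller \eqref{eq:u_hat}, so that the argument legitimately reduces to the Theorem~\ref{theo:met2} proof with $\Delta(x,u)$ replaced by the residual $e(t)$, and with no $\derp{h}{x}\hat{\Delta}$ feedforward term needed in the constraint.
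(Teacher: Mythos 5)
Your proof is correct and takes essentially the same route as the paper: its one-sentence proof of Corollary~\ref{corol:met3} is exactly your argument in compressed form---differentiate $h$ along the compensated closed-loop dynamics \eqref{syscon}, lower-bound $\derp{h}{x}\,e$ by $-\norm{\derp{h}{x}}\|e\|$ via Cauchy--Schwarz, and conclude forward invariance of $\C$ from \eqref{eq:thm4_set} together with Theorem~\ref{teo:cbfdef}. Your explicit justification that the compensation $u=\tilde{u}-Q(x)\hat{\Delta}$ removes $\hat{\Delta}$ from the dynamics (i.e.\ $\hat{g}(x)Q(x)\hat{\Delta}=\hat{\Delta}$) is a step the paper leaves implicit when deriving \eqref{syscon}, so your write-up is, if anything, slightly more careful than the original.
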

\begin{proof} 
Since the term ${(\!-\! \norm{\frac{\partial h}{\partial x}} \!\|e\|   )}$ is a lower bound for the effect of the uncertainty estimation error on the time derivative of CBF, \eqref{eq:thm4_set} results in ${x(t) \in \C}$, ${\forall t \geq 0}$. 
\end{proof}
Using Corollary~\ref{corol:met3}, we define an alternative robust {CBF-QP} for Method 1:
\begin{argmini*}|s|
{\tilde{u}  \in U}{\|\tilde{u} -\mathbf{k_d}(x) \|^2 }
{\label{robCBF1}}
{\tilde{u} ^*({x})\!=\!}
\addConstraint{\dfrac{\partial h}{\partial x} ( \!\hat{f}(x)\! \!+\! \hat{g} (x) \tilde{u} )  \!-\! \!\norm{\frac{\partial h}{\partial x}}\! \!\|e\|\!     \!\geq\! \!-\!\alpha (\!h(x)\!) }
\end{argmini*}
which explicitly depends on the error dynamics of the uncertainty estimator.

\section{Simulations}
 The parameters of the simulation are taken from \cite{dacs2022robust} (Table 1).
\begin{figure*}[t]
	\centering
 \vspace{2mm}
	\includegraphics[width=1.0\linewidth]{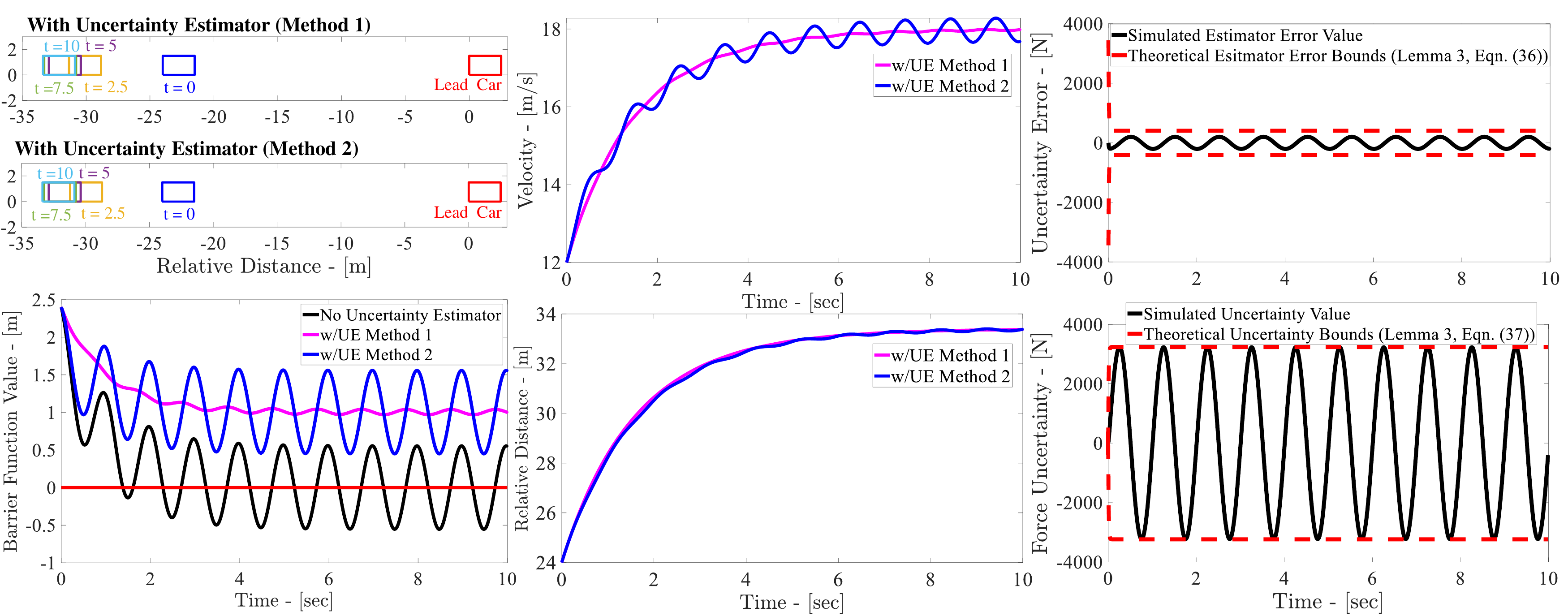}
	\caption{Simulations for the ACC Example with uncertainty. (Left) Evaluation of CBF h(x). The proposed controller maintains safety in the presence of unmodelled dynamics.
    (Middle) The closed-loop system performance is improved by disturbance rejection (Method 1).
    (Right) The estimation error and output satisfy the theoretical bounds.} 
	\label{fig:result_acc}
\end{figure*}
This section completes the ACC example with the proposed uncertainty estimator, and uses a multi-rotor example to illustrate Method 2.
\subsection{Adaptive Cruise Control Example}
 The parameters of the simulation are taken from \cite{dacs2022robust} (Table 1).
The closed-loop control performance requires cruise at a constant speed encoded into the QP via control Lyapunov function (CLF) ${V(x) = ( v_f - v_d)^2}$. We augment robust CBF-QP with the related CLF constraint; hence we have the following robust CLF-CBF-QP:
\begin{argmini*}|s|
{{u}  \in \U, \delta_c \in \mathbb{R}}{\|{u} \|^2 + p_c \delta_c^2 }
{\label{acc_QP}}
{\tilde{u} ^*({x})=}
\addConstraint{L_{\hat{f}} h(x) + L_{\hat{g}} h(x) {u}    \geq \mathcal{S}(x)  }
\addConstraint{\dot{V}(x, u)  \leq -\lambda V(x) + \delta_c }
\end{argmini*}
where ${\delta_c}$ is a relaxation variable, ${p_c = 100}$, we pick ${\lambda = 0.7}$. 

We set the first row of ${\Delta(t) = (\delta_{F_r,M} + \delta_{M^{-1}}~u )}$ given in \eqref{ACCdyn}, as ${\Delta(t):= \frac{2sin(2 \pi t)}{M} + \frac{0.2 F_r(v_f)}{M} + \frac{1}{2 M} u }$, which represents both the drag and mass uncertainties and violates the safety as discussed in \eqref{eq:cbfunacc}. To obtain ${\delta_L, \delta_b \in \mathbb{R}^+}$ constants for this uncertainty, we run the simulation for possible scenarios. Then, we have ${\delta_L = 26,~\delta_b = 12}$.  
We choose ${\Lambda = \diag(100, 100)}$ to design the proposed uncertainty with a small estimation error; therefore, ${ \mu_e := \frac{\lambda_{min}}{4} = 25}$. For Method 1, we pick ${\mu_h = 1}$ and ${\sigma_V = 0.1}$, so ${\mathcal{D} \triangleq 4\sigma_{V}\mu _e - 2\sigma_V \mu_h > 0}$.

Fig.~\ref{fig:result_acc} shows the evaluation of CBF $h$, the states for an initial set ${x(0) = [18 \ 12 \ 24]^T}$, and the uncertainty estimation performance of the proposed observer, respectively.
As seen from Fig.~\ref{fig:result_acc}-(Left), the unmodelled dynamics cause safety violations without the estimator. 
Method 2 provides a safety guarantee but does not compensate for the effect of disturbance on the states (${v_f, D}$), as observed from Fig.~\ref{fig:result_acc}-(Middle) Method 1 achieves the disturbance attenuation requirement. 
Finally, it is observed from  Fig.~\ref{fig:result_acc}-(Right) that the proposed uncertainty estimator can effectively estimate actual modelling uncertainties within the quantified bounds.
 
\vspace{-1mm}
\subsection{Multirotor Trajectory Tracking Example}
\vspace{-1mm}
The multirotor states are its global position ${\mathrm{p} = [\mathrm{x},\mathrm{y},\mathrm{z}] \in \mathbb{R}^3}$, and Euler angle parameterization of rotation matrix ${R(\phi,\theta,\psi) \in \text{SO(3)}}$ where ${\phi}$, ${\theta}$, and ${\psi}$ are roll, pitch, and yaw angles. Rotation rates ${p,~q}$, and $r$ are defined with respect to the body-fixed $\mathrm{x}$, $\mathrm{y}$, and $\mathrm{z}$ axes. If $T$ denotes the total axial thrust on the multirotor, we have the following dynamics (nominal) where gravity is compensated to ensure ${f(0) = 0}$: 
\begin{align} \label{eq:pos}
\ddot{p} &= TR(\phi,\theta,\psi)e_3,\qquad  \dot{T} = u_1,\\
   \begin{bmatrix}
       \dot{\phi}\\
       \dot{\theta}\\
       \dot{\psi}
   \end{bmatrix} &= \underbrace{\begin{bmatrix}
       1 & \sin(\phi)\tan(\theta) & \cos(\phi)\tan(\theta)\\
       0 & \cos(\phi) & -\sin(\phi)\\
       0 & \sin(\phi)\cot(\theta) & \cos(\phi)\cot(\theta)
   \end{bmatrix}}_{W(\phi,\theta,\psi)} \begin{bmatrix}
       u_2\\
       u_3 \\
       u_4
   \end{bmatrix}, \label{eq:attitude}
\end{align}
where $\mathrm{m}$ is the vehicle mass, ${e_{3}=[0,0,1]^T}$. ${u = [u_1, u_2, u_3, u_4]}$ are the control inputs where $u_1$ is the rate of change of mass normalized thrust, ${u_2, u_3}$, and $u_4$ are the body rotation rates with respect to the body-fixed $\mathrm{x}$, $\mathrm{y}$, and $\mathrm{z}$ axes.  The system outputs are ${s = [\mathrm{x},\mathrm{y},\mathrm{z},\psi]^{T}}$, and the goal is to track a desired  differential trajectory ${s_d:[t_{0},t_{f}] \to \mathbb{R}^{4}}$. Let ${e_p = p - [x^d,y^d,z^d]^T}$ be the position tracking errors and ${e_{\psi} = \psi - \psi^d}$ be the yaw angle tracking error. To obtain an error dynamics in control-affine form, we perform a time-derivative on \eqref{eq:pos} to obtain the following:
\begin{align} \label{eq:drone_dynamics}
   \frac{d}{dt} \underbrace{\begin{bmatrix}
        e_p\\
        \dot{e}_p\\
        \ddot{e}_p\\
        e_{\psi}
    \end{bmatrix}}_{\eta} =\mathcal{F} \eta +  \underbrace{\begin{bmatrix}
        0_{6\times 4} \\
        I_{4\times 4}
        \end{bmatrix}}_{\mathcal{G}} \underbrace{\underbrace{\begin{bmatrix}
            Re_3 & -TRe_3^{\vee}\\
            0 & e_3^{T}W
        \end{bmatrix}}_{B_{u}} \begin{bmatrix}
            u_1 \\
            u_2 \\
            u_3 \\
            u_4
        \end{bmatrix}}_{v},  
        \vspace{-1mm}
\end{align}
where
${\mathcal{F}=\left[\begin{smallmatrix}
        0_{6\times 3} &I_{6\times 6} & 0_{6\times 1}\\
        0_{4\times 3} &0_{4\times 6} & 0_{4\times 1}
    \end{smallmatrix}\right]}$. 
For simplicity, we assume full-state observability. The output has a vector relative degree ${\vec{\gamma} = [3,3,3,1]}$. We aim to find a family of tracking controllers that would exponentially stabilize the error dynamics \eqref{eq:drone_dynamics} to the origin. Consider a feedback control law in terms of the auxiliary input ${v = K\eta}$, the goal to stabilize the closed-loop system ${\dot{\eta} = (\mathcal{F}+\mathcal{G}K)\eta = A_{cl}\eta}$. Let ${V_{\eta} = \eta^{T}P\eta}$ where the positive-definite matrix $P$ is the solution to the Continuous-Time Algebraic Equation (CARE) ${\mathcal{F}^TP +P \mathcal{F} - P \mathcal{G}\mathcal{G}^T P = -Q}$, where ${Q = I}$ to ensure maximum regions of attraction. Nominally, we obtain a stabilizing controller by solving a standard CLF-QP.

For safety, known static spherical obstacles must be avoided regardless of the reference trajectory. Let ${p_i^{obs} = [x_i^{obs},y_i^{obs},z_i^{obs}]^T}$ and $r_i$ denote the position of the $i^{th}$ static spherical obstacle in global/inertial frame and the radius of the obstacle, respectively. We use the following CBF for obstacle avoidance:${~
    h_i = \|p-p^{obs}_i\| - r_i - r_0,
}$
where $r_0$ is the radius of the multirotor. To match the relative degree, we employ a HOCBF:${~
    h_{e,i}\! =\! L_{f}^2 h_i\! +\! \alpha_{2}L_{f}h_i \!+\! \alpha_{1}h_i,
}$
where ${\alpha_1, \alpha_2\!\in\!\mathbb{R}^+}$ are tunable constants.
\begin{figure}[t]
	\centering
	\includegraphics[width=0.9\linewidth]{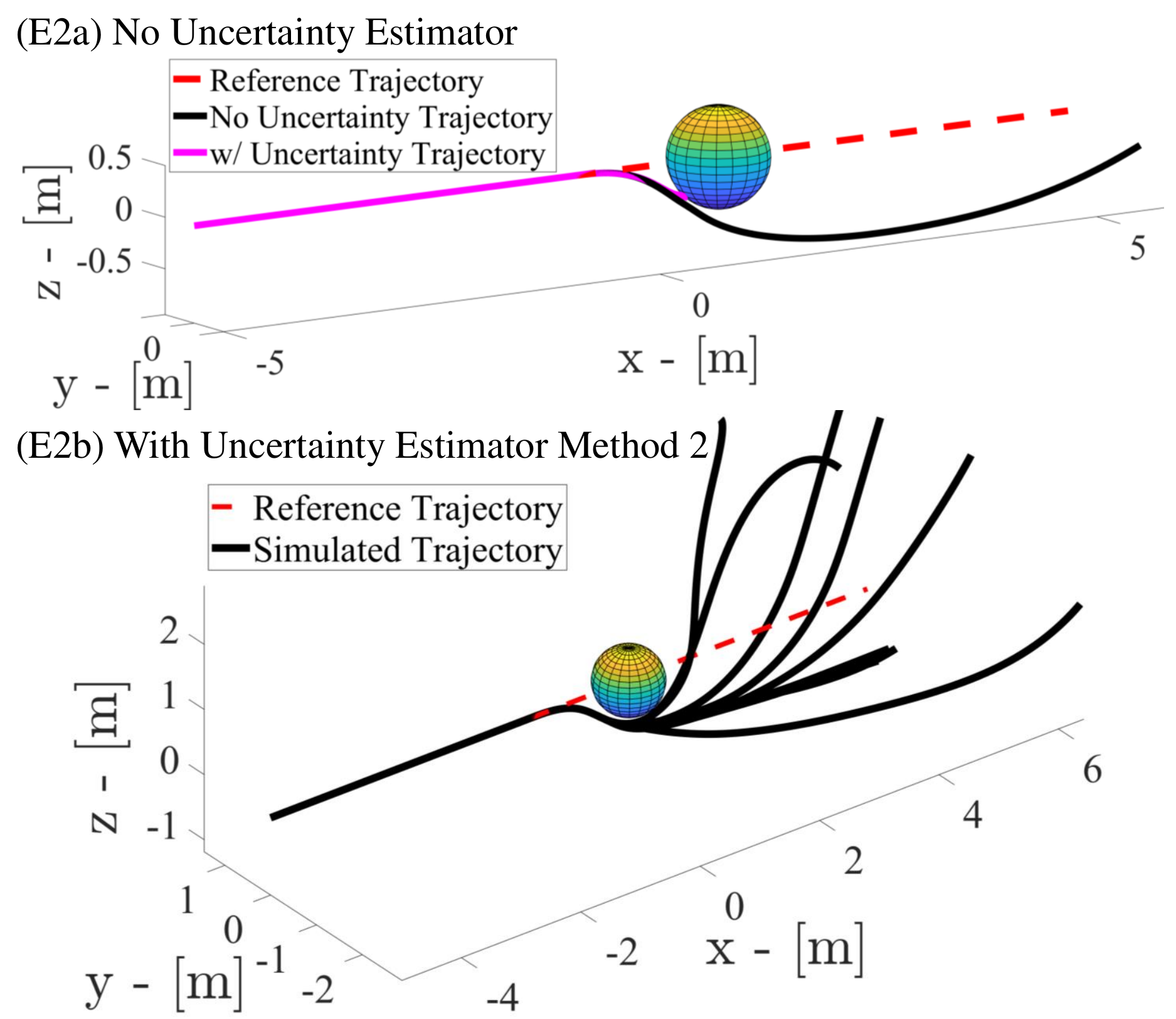}
	\caption{Simulated multirotor obstacle avoidance scenarios. Despite the nominal CBF being able to avoid the obstacle, the actual system with uncertainty \eqref{eq:uncertainty_drone} failed as shown in figure (E2a). Using the proposed uncertainty estimator (method 2), the augmented CBF can avoid the obstacle in 10 variations of uncertainties}
	\label{fig:result_drone}
\end{figure}
    
An optimization-based safety filter can be obtained by solving \eqref{CBF-QP}. We introduce state and input-dependent uncertainty in the system dynamics to model physical phenomena like aerodynamic interactions, input delays, compliance, etc. Therefore, error dynamics \eqref{eq:drone_dynamics} is more accurately
\begin{equation}
    \dot{\eta} = \underbrace{\mathcal{F}\eta}_{\hat{f}(\eta)} + \underbrace{\mathcal{G}B_u u}_{\hat{g}(\eta)u} + \underbrace{\Delta_{A}(\eta) + \Delta_{B}(\eta)u}_{\Delta(\eta,u)}. \label{eq:uncertainty_drone}
\end{equation}
In particular, we simulated the presence of the following state-dependent disturbance:
\begin{align} 
    \Delta_{A}(\eta) =
    [\,0_{1\times 6},\, c_d\tanh{(\dot{p}^T)},\, 0_{1\times 4}
   ]^T,
\end{align}
where $c_d$ can be interpreted as drag coefficients. 
Moreover, we also inject input uncertainty as 
${ \Delta_{B}(\eta) = \mbox{diag}([\delta_{u_1}, \delta_{u_2}, \delta_{u_3},  \delta_{u_4}]),
}$
where ${\delta_{u_i} \in (-1,0]}$ is the percent reduction of input representing delays. Since higher-order barrier functions are adopted for this scenario, we employ Method 2 to address model uncertainty. Incorporating the proposed disturbance observer \eqref{eq:bhat}-\eqref{eq:xidot} estimator into the CBF constraint, we have
${ \dot{h}_{e,i} = \dot{\hat{h}}_{e,i} + \frac{\partial h_{e,i}}{\partial \eta}(\hat{\Delta} + e)}$, where ${\dot{\hat{h}}_{e,i}}$ represents the time derivative of CBF along with the nominal system dynamics, and one can verify symbolically that ${\frac{\partial h_{e,i}}{\partial \eta} \neq 0}$. Therefore, we leverage Lemma~\ref{lem:Vb} and Corollary~\ref{theo:met3} to obtain a robust control barrier constraint with uncertainty estimation. 

For simulation, we choose parameters such that the uncertainty Lipschitz constant ${\delta_L = 0.1}$ and the global upper bound ${\delta_b = 0.1}$, and the results are depicted in Fig. \ref{fig:result_drone}. The nominal system with a CLF-QP controller filters with a barrier-based safety filter can avoid the obstacle (shown in (E2a)), and the disturbed system under a nominal approach is unable to main safety. We then incorporated the proposed disturbance observer \eqref{eq:bhat}-\eqref{eq:xidot} and augment the barrier constraint using \eqref{heineq}. The resulting controller can avoid the obstacle with different disturbances (different coefficients/parameters while satisfying ${\delta_L}$ and ${\delta_b}$) as presented in (E2b), showcasing the versatility of the proposed uncertainty estimator.
\vspace{-1mm}
\section{Conclusions and Future Work}
\vspace{-1mm}
We proposed a novel uncertainty estimator for control affine systems with unmodeled, state, and input-dependent uncertainty to improve system robustness. Moreover, the bounds for the estimation error and output were introduced. By incorporating the estimator with CBFs, robust safety conditions were derived. We showcase two methods to assert safety with theoretical analysis and simulation validations. 

Future work aims to remove the knowledge of the uncertainty bound $\delta_b$, and Lipschitz bound $\delta_L$ by using machine learning techniques like Gaussian processes \cite{dacs2023} to estimate these parameters from data. 
We also seek to combine state observer-based CBF methods \cite{agrawal2022safe} with our methods. 
Lastly, hardware validation and comparison with our prior work \cite{akella2022learning} will further showcase the performance of these methods.

\bibliographystyle{IEEEtran}
\bibliography{References.bib}

\end{document}